\documentclass[11pt]{article}

\usepackage[letterpaper,margin=1.15in]{geometry}
\usepackage[T1]{fontenc}
\IfFileExists{newtxtext.sty}{\usepackage{newtxtext}}{}
\usepackage{amsmath,amsthm,mathtools}
\numberwithin{equation}{section}
\IfFileExists{newtxmath.sty}{\usepackage{newtxmath}}{\usepackage{mathptmx}\usepackage{amssymb}}
\usepackage{microtype}
\microtypesetup{nopatch=footnote}
\usepackage{booktabs}
\usepackage{tabularx}
\usepackage{array}
\usepackage{enumitem}
\usepackage{tikz}
\usetikzlibrary{arrows.meta,calc,positioning,shapes.geometric}
\usepackage{pgfplots}
\pgfplotsset{compat=1.18}
\usepackage[hidelinks,hyperfootnotes=false]{hyperref}
\usepackage{setspace}
\usepackage{aliascnt}
\newtheorem{theorem}{Theorem}[section]

\newaliascnt{lemma}{theorem}
\newtheorem{lemma}[lemma]{Lemma}
\aliascntresetthe{lemma}

\newaliascnt{corollary}{theorem}
\newtheorem{corollary}[corollary]{Corollary}
\aliascntresetthe{corollary}

\newaliascnt{proposition}{theorem}

\aliascntresetthe{proposition}

\newaliascnt{definition}{theorem}

\aliascntresetthe{definition}

\newaliascnt{remark}{theorem}
\newtheorem{remark}[remark]{Remark}
\aliascntresetthe{remark}

\usepackage[nameinlink,capitalise,noabbrev]{cleveref}
\crefname{theorem}{Theorem}{Theorems}
\crefname{lemma}{Lemma}{Lemmas}
\crefname{corollary}{Corollary}{Corollaries}
\crefname{proposition}{Proposition}{Propositions}
\crefname{definition}{Definition}{Definitions}
\crefname{remark}{Remark}{Remarks}

\usepackage{xcolor}
\definecolor{RootA}{RGB}{0,72,140}
\definecolor{RootB}{RGB}{170,38,38}
\definecolor{RootC}{RGB}{0,110,70}
\definecolor{Slate}{RGB}{95,102,112}
\definecolor{TintA}{RGB}{232,240,250}
\definecolor{TintC}{RGB}{231,245,238}

\newcommand{\dens}{\operatorname{dens}}
\newcommand{\Round}{\textup{\textsc{Round}}}
\newcommand{\OPT}{\operatorname{OPT}}
\newcommand{\calP}{\mathcal P}
\newcommand{\wideG}{\widehat G}
\newcommand{\wideE}{\widehat E}
\newcommand{\mst}{\operatorname{mst}}
\title{\bfseries An $8/5$ Rounding for Half-Integral Forest-BCR\\[.15em]
via Root Supports and Circuit Rank}
\author{\textbf{Morteza Alimi}\\[-.05em]
\textit{University of Augsburg, Augsburg, Germany}\\[-.05em]
\texttt{morteza.alimi@uni-a.de}}
\date{}

\hypersetup{
  pdftitle={An 8/5 Rounding for Half-Integral Forest-BCR via Root Supports and Circuit Rank},
  pdfauthor={Morteza Alimi},
  pdfsubject={Approximation algorithms; Steiner Forest; bidirected cut relaxation},
  pdfkeywords={Steiner Forest, bidirected cut relaxation, LP rounding, circuit rank}
}

\begin{document}
\maketitle

\begin{abstract}
We study the rounding of a supplied half-integral feasible solution of the
root-assignment bidirected cut relaxation for Steiner Forest (Forest-BCR).
Byrka, Grandoni, and Traub [IPCO 2025] proved a $16/9$ guarantee for a recursive
framework that normalizes the LP point, selects a vertex set of maximum
projected LP density, buys a minimum spanning tree on that set, contracts it,
and recurses.  We prove that the same framework has guarantee
$8/5$.

The new analysis keeps the orientation and the root label of each projected
half-unit of LP mass.  In a simple projection, the cut constraints at a
terminal of degree two determine the root-assignment vector of every demand
incident with it, and half-integrality leaves only two possibilities: a unit
assignment to one root, which forces excess outdegree inside that root's
support, or a split assignment to two roots, which forces overlap between
their supports.  For every connected component $C$ of the split-root graph
this yields
$ \beta_C\ \ge\ \frac{L_C}{2}$,
where $\beta_C$ is the circuit rank of the union of the root supports in $C$
and $L_C$ is the number of its vertices of degree two in the full projection.
Balancing the density certificate obtained from this inequality against the
ordinary degree sum gives a vertex set of density at least $5/8$, and the
inherited contraction lemma turns that into the $8/5$ rounding.  

For every $q\ge3$ we also construct a normalized half-integral point whose
maximum projected density is exactly $5q/[2(4q-1)]$, so the universal
projected-density bound is asymptotically tight.  
\end{abstract}

\paragraph{Keywords.}
Steiner Forest, bidirected cut relaxation, LP rounding, densest subgraph,
circuit rank, splitting off.

\section{Introduction}\label{sec:introduction}

In the \emph{Steiner Forest} problem, an undirected graph with nonnegative
edge costs and a finite list of demand pairs are given, and the goal is to buy
a minimum-cost subgraph connecting the endpoints of every demand.  The
classical primal--dual algorithms of Agrawal, Klein, and Ravi and of Goemans
and Williamson give a factor-$2$ approximation~\cite{AKR95,GW95}; Jain's
iterative-rounding theorem gives another factor-$2$ route through a much more
general network-design relaxation~\cite{Jain01}.  After more than three
decades, Ahmadi, Gholami, Hajiaghayi, Jabbarzade, and Mahdavi obtained the
first factor below $2$~\cite{AGHJM25}.  Gupta and Traub subsequently gave a
simpler $1.994$-approximation~\cite{GT26}.

This paper concerns a different, LP-relative question.  Bidirected cut
formulations have long played a central role for Steiner Tree
\cite{GM93,Wong84,RV99}; for a broader comparison of Steiner Forest
formulations, see~\cite{SZM21}.  The bidirected cut relaxation for
Steiner Tree has a polynomial-size variable representation and
polynomial-time cut separation, and its integrality gap was recently proved to
be below $2$~\cite{BGT24}; a recent preprint gives the bound
$1.898$~\cite{PT26}.  Byrka,
Grandoni, and Traub introduced a root-assignment extension to Steiner Forest,
which we call Forest-BCR throughout~\cite{BGT25,BGT26}.  Every demand is
fractionally assigned to candidate roots,
and the mass assigned to a root is routed toward that root.  They proved an
unrestricted integrality-gap lower bound of $3/2$ and a $16/9$ rounding theorem
for every half-integral feasible point.

The half-integral result does not by itself give a $16/9$-approximation for
Steiner Forest, because an optimum Forest-BCR point need not be half-integral.
Nevertheless, half-integral points are a natural structural test for this
compact relaxation.  The known asymptotic $3/2$ lower-bound family does not
settle the restricted question, because on those instances the optimal
Forest-BCR solutions are not half-integral~\cite{BGT26}.  Its $q=2$
member does give a half-integral
bound: the graph has six vertices, so every feasible forest is a spanning
tree of cost $5$, while the displayed point of cost $2q=4$ has all
coordinates in $\{0,\tfrac12,1\}$ because $1/q=\tfrac12$.  The half-integral
rounding gap is therefore at least $5/4$~\cite[proof of Thm.~4]{BGT26}.

The recursive framework of~\cite{BGT26} is especially simple.  After metric
preprocessing and normalization, it projects each half-unit of LP mass to an
undirected multiedge, selects a vertex set of maximum LP density, buys a
minimum spanning tree on that set, contracts it, and recurses.  Their
$16/9$ bound follows from a universal density lower bound of $9/16$
(\cite[Lem.~5]{BGT26}).  We strengthen that one lemma to $5/8$ and change
nothing else.

\subsection{Main result and structural idea}

\begin{theorem}[Main theorem]\label{thm:main}
Let $G$ be a finite undirected graph with nonnegative rational edge costs
encoded in binary and let $\calP$ be a finite list of demand pairs.  There is
a deterministic algorithm, polynomial in the input encoding length, that,
given an explicitly represented half-integral feasible Forest-BCR point
$(x,z)$, outputs a feasible Steiner forest $F$ satisfying
\[
  c(F)\le \frac85\,c(x).
\]
\end{theorem}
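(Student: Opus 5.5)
The proof will follow the recursive framework of \cite{BGT26} unchanged and replace only its density lemma, \cite[Lem.~5]{BGT26}. The argument runs in four steps.

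\textbf{Step 1: reduction to density.} After metric preprocessing and normalization, the half-integral point $(x,z)$ projects to an undirected multigraph $H$ on the current (contracted) vertex set. Each half-unit of directed LP mass becomes one multiedge of cost $c(e)/2$. The inherited contraction lemma of \cite{BGT26} states the following. Suppose every normalized half-integral point admits a vertex set $S$ whose projected LP density $\dens(S)$ (LP mass inside $S$ per vertex merged by contraction, measured as in \cite{BGT26}) is at least $\rho$. Then the recursion that buys a minimum spanning tree on $S$, contracts it, and recurses pays at most $\rho^{-1}c(x)$ in total. That lemma also checks that contraction preserves half-integral feasibility and normalization. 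With $\rho=5/8$ this gives the bound $8/5$. Polynomiality follows because each round contracts at least one vertex, and a maximum-density set is found by a standard densest-subgraph flow computation on rational data. It therefore remains to prove $\max_S \dens(S)\ge 5/8$ for every normalized half-integral point.

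\textbf{Step 2: local structure at degree-two terminals.} Assume the projection is simple (otherwise a parallel pair already has high density). Let $t$ be a terminal of degree two in $H$. The cut constraints at $t$ force the root-assignment vector of each demand at $t$, and half-integrality leaves two cases.
\begin{itemize}
\item \emph{Unit case.} The whole unit goes to one root $r$. Then both half-edges at $t$ lie in $r$'s support, and $t$ has excess outdegree there.
\item \emph{Split case.} Half goes to each of two roots $r,r'$. The supports of $r$ and $r'$ then overlap, and we record an edge $rr'$ in the split-root graph.
\end{itemize}

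\textbf{Step 3: the circuit-rank inequality.} Fix a component $C$ of the split-root graph and let $U_C$ be the union of its root supports. Each root support is an arborescence-like structure carrying at least one unit of in-flow per non-root terminal. Unit-type degree-two vertices create extra out-arcs beyond a spanning structure. Split-type vertices create shared edges whose double use closes cycles. Summing edges minus vertices plus components over $U_C$ and charging half a cycle to each degree-two vertex should give $\beta_C\ge L_C/2$. This is the step I expect to be the main obstacle. The difficulty is avoiding double counting when one cycle is witnessed by several degree-two vertices, or when supports overlap along long paths. I would first try an explicit splitting-off argument at each degree-two vertex, replacing its two half-edges by one edge, and track how the circuit rank drops by at least $1/2$ per step.

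\textbf{Step 4: balancing.} The density of $U_C$ is (edges)/(vertices$-1$), which is about $1+\beta_C/(|V_C|-1)$ after scaling by $1/2$ per edge. So $\beta_C\ge L_C/2$ gives a certificate that is large when degree-two vertices dominate. Otherwise the degree sum over the whole projection is at least $2L+3(n-L)$, since non-root terminals have degree at least $2$ and those not of degree two have degree at least $3$, and this gives density $\tfrac12\cdot\tfrac{3n-L}{2(n-1)}$ for the whole vertex set. Taking the better of the two certificates, either globally or summed over components with a mediant argument, and optimizing over the fraction $L/n$, the worst case should sit at the crossover, which yields $5/8$. The $5q/[2(4q-1)]$ construction would confirm that this balance point is the right one.
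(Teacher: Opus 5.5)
\textbf{Verdict: genuine gap.} Your architecture is the paper's: inherited contraction lemma, signature dichotomy at degree-two terminals, the component inequality, and degree balancing. But Step~3 is the entire new content of the proof, and you leave it at ``should give''.

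The picture behind your sketch is also off in three places:
\begin{itemize}
\item In a simple projection each projected edge carries exactly one root label. So the supports $D_r$ are edge-disjoint, and a split-low vertex creates a shared \emph{vertex} of two supports, not shared edges.
\item Suppressing a degree-two vertex (replacing its two edges by one) removes one vertex and one edge, so circuit rank is unchanged. There is no ``drop of $1/2$ per step'' to track.
\item Your unit case overlooks that a unit-low vertex may be the root $r$ itself. There both arcs enter, and there is no excess outdegree.
\end{itemize}
The $1/2$ is not a local per-vertex charge. It comes from averaging two global bounds, each paying for a different defect.

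The missing ingredients are these:
\begin{itemize}
\item[(a)] By path witnesses and full reduction, $D_r$ is connected, no arc leaves $r$, and every other vertex has an out-arc. Hence $\beta(D_r)=\sum_{v\ne r}(d^+_{D_r}(v)-1)$, and each nonroot unit-low vertex, having outdegree two, contributes one.
\item[(b)] Edge-disjointness gives $\beta_C=\sum_{r\in C}\beta(D_r)+\iota_C-(|C|-1)$ with $\iota_C=\sum_{r\in C}|V(D_r)|-|V(D_C)|$. Each split-low vertex contributes one to $\iota_C$.
\item[(c)] Two defects remain: unit-low vertices that are roots, and the $|C|-1$ overlaps consumed just to connect the supports. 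They are paid for by $\beta(D_r)\ge1$ for every unit-bearing root (two arc-disjoint $v$--$r$ paths, using simplicity) and by $\iota_C\ge2(|C|-1)$ (both endpoints of a split demand lie in both supports).
\end{itemize}
Writing $a_C$ for the number of unit-bearing roots, this yields $\beta_C\ge a_C+|C|-1$ and $\beta_C\ge L_C-a_C-(|C|-1)$. Their average is $L_C/2$.

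Step~4 also needs care on two points.
\begin{itemize}
\item A per-component mediant in the vertex fraction $L_C/|V(D_C)|$ does not aggregate, because supports from different components of $J$ may share high vertices. The paper uses the edge fraction $L_C/m_C$ instead; it sums correctly because labels partition the edges and \cref{lem:low-types} partitions the low vertices. Your global variant also works once you note that cycle spaces of edge-disjoint subgraphs are independent. Then $m_K-n_K+1\ge L_K/2$ on every connected component $K$ of $\wideG$, and adding twice this to $2m_K\ge3n_K-L_K$ gives $4m_K\ge5n_K-2$, i.e.\ density above $5/8$.
\item The degree count requires every degree-two vertex to be a terminal covered by Step~2. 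That is the split-freeness part of \cref{lem:degree-normal-form}, which you never invoke.
\end{itemize}
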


It is convenient to name the quantity that \cref{thm:main} bounds.  The
\emph{half-integral rounding gap} is the supremum of $\OPT(I)/c(x)$ over all
instances $I$ and all half-integral feasible Forest-BCR points $(x,z)$ for $I$
with $c(x)>0$, where $\OPT(I)$ is the minimum cost of a feasible Steiner
forest.  \Cref{thm:main} shows that this quantity is at most $8/5$, and the
$q=2$ instance recalled above shows that it is at least $5/4$.

The contribution is the structural analysis of the normalized point.  At a
globally projected degree-two terminal $v$, the singleton and complementary
singleton cut constraints account for the entire unit of incident LP mass.
They therefore determine the root-assignment vector of every demand incident
with $v$.  Half-integrality leaves only two possibilities:
\begin{itemize}[leftmargin=1.8em,itemsep=.2em]
\item a unit assignment to one root, which contributes excess outdegree
      inside one root support; or
\item a split assignment to two roots, which contributes overlap between
      their two supports.
\end{itemize}
For a connected component $C$ of the split-root graph, let $L_C$ denote the
number of globally degree-two support vertices in the union of its root
supports and let $\beta_C$ denote the circuit rank of that union.  Combining
internal excess outdegree with overlap gives the key inequality
\[
  \beta_C\ge\frac{L_C}{2}.
\]

\subsection{Why the density is \texorpdfstring{$5/8$}{5/8}}
\label{sec:balance-intro}

The numerical optimization is short; Sections~\ref{sec:root-structure}
and~\ref{sec:cycle-rank} prove the structural statements that justify it.
Fix $\theta\in(0,1)$.  If some split-root component has
$L_C\ge\theta m_C$, where $m_C$ is the number of its labeled projected
edges, then connectedness and $\beta_C\ge L_C/2$ give a vertex set of density
at least
\begin{equation}\label{eq:intro-local}
  \frac{1}{2-\theta}.
\end{equation}
If no component satisfies this inequality, globally low vertices are scarce.
Since low vertices have projected degree two and all other support vertices
have degree at least three, the global degree sum gives density strictly
larger than
\begin{equation}\label{eq:intro-global}
  \frac{3}{2(2+\theta)}.
\end{equation}
The first expression increases in $\theta$ and the second decreases.  They
agree at
\begin{equation}\label{eq:crossing}
  \theta=\frac25,
  \qquad
  \frac{1}{2-\theta}=\frac{3}{2(2+\theta)}=\frac58.
\end{equation}
Thus every normalized point has a set of density at least $5/8$, and the
contraction lemma yields factor $8/5$; see \cref{fig:crossing}.

\begin{figure}[t]
\centering
\begin{tikzpicture}
\begin{axis}[
  width=9.4cm,height=5.0cm,
  xlabel={threshold $\theta$},
  ylabel={guaranteed density},
  xmin=0,xmax=1,ymin=0.45,ymax=1.02,
  xtick={0,0.2,0.4,0.6,0.8,1},
  ytick={0.5,0.5625,0.625,0.75,1},
  yticklabels={$\tfrac12$,$\tfrac9{16}$,$\tfrac58$,$\tfrac34$,$1$},
  grid=both,grid style={line width=.2pt,draw=black!12},
  legend pos=north west,legend cell align=left,
  label style={font=\small},tick label style={font=\small},
  legend style={font=\footnotesize}]
\addplot[RootB!25,line width=4.5pt,domain=0:0.4,samples=50,forget plot]{1/(2-x)};
\addplot[RootB!25,line width=4.5pt,domain=0.4:1,samples=50]{3/(2*(2+x))};
\addlegendentry{guarantee $\min\{f,g\}$}
\addplot[RootA,line width=1.2pt,domain=0:1,samples=80]{1/(2-x)};
\addlegendentry{$f(\theta)=\frac1{2-\theta}$}
\addplot[RootC,line width=1.2pt,domain=0:1,samples=80]{3/(2*(2+x))};
\addlegendentry{$g(\theta)=\frac3{2(2+\theta)}$}
\addplot[only marks,mark=*,mark size=2pt,black] coordinates {(0.4,0.625)};
\draw[dashed,black!55] (axis cs:0.4,0.45)--(axis cs:0.4,0.625);
\draw[dashed,black!55] (axis cs:0,0.625)--(axis cs:0.4,0.625);
\node[anchor=south west,font=\footnotesize] at (axis cs:0.42,0.63)
  {$(\tfrac25,\tfrac58)$};
\end{axis}
\end{tikzpicture}
\caption{The constant $5/8$ is the crossing point of the component
circuit-rank certificate and the global degree certificate.}
\label{fig:crossing}
\end{figure}

\paragraph{Contributions.}
The first contribution is the structural lemma $\beta_C\ge L_C/2$, which
converts every projected degree-two vertex into circuit rank.  The second is
that this lemma yields the universal $5/8$ projected-density bound, and hence
the $8/5$ rounding theorem, inside the inherited recursive
maximum-density/MST-contraction framework.  The third is a family with exact
maximum projected density $5q/[2(4q-1)]$: the universal projected-density
bound is therefore asymptotically tight, and on that family there exists a
legal execution of the framework with ratio $8/5-2/(5q)$ against the displayed
point.  The family is not an integrality-gap construction.  For completeness
and implementability, \cref{app:preprocessing} gives a self-contained
polynomial-time normalization.  Its exact feasibility test minimizes over
the two membership patterns in which a directed split can lower a valid-cut
capacity.  We claim no stronger normalization theorem than
\cite{BGT26}; the appendix only makes the implementation explicit.

\subsection{Relation to the previous proof}

The recursive rounding framework is inherited from Byrka, Grandoni, and
Traub~\cite{BGT26}: compute an exact maximum-density set, buy an MST,
contract, and recurse.  Our change is the structural analysis of normalized
half-integral points.  We retain the orientation and root label of each
projected half-unit, turning every globally degree-two vertex into an exact
circuit-rank or overlap contribution.  The previous proof
of~\cite[Lem.~5]{BGT26} uses an elegant unlabeled low/high-degree token
redistribution: every support vertex receives one token per incident
projected edge, a low vertex keeps its two, and a vertex of degree $k\ge3$
keeps $2+\frac{k-2}{k+1}\ge\frac94$ and passes $\frac{k-2}{k+1}\ge\frac14$ to
each neighbour, so that every support vertex ends with at least $9/4$ and the
density is at least $9/16$.  The root-labeled accounting gives $5/8$ instead.  Our \cref{thm:density} is a
drop-in replacement for that lemma: the hypotheses and the conclusion have the
same shape, and every other component of their analysis is used verbatim.

\Cref{app:preprocessing} reaches the same target normal form by a
self-contained implementation.  A direct calculation shows that a directed
split can reduce cut capacity in two membership patterns, and our feasibility
test enforces both.  We treat this appendix as infrastructure rather than as a
separate contribution.

\subsection{What is simple here, and what is not}
\label{sec:simplicity}

The rounding algorithm is unchanged, so the simplification claimed here is at
the level that determines the constant.  After normalization, the previous
token redistribution is replaced by one exact local signature lemma
(\cref{lem:signature}) and one circuit-rank identity
(\cref{lem:root-support}(iv)); the improvement from $9/16$ to $5/8$ then
reduces to a two-case degree count whose threshold is the crossing point of
two elementary rational functions.  A reader who accepts
\cref{lem:normalization} can follow the entire derivation of the constant in
about three pages.

We do not claim that the whole implementation is elementary.  Normalization
and the exact maximum-density subroutine remain technical polynomial-time
infrastructure, and they are the reason the paper is longer than its
mathematical core.  They are stated and proved here for completeness, not
because the constant depends on their details.

\subsection{Tightness and scope of the barrier}

For every $q\ge3$, Section~\ref{sec:barrier} constructs a normalized
half-integral point whose maximum projected density
$\dens_x(W)$---the LP mass inside $W$ divided by $|W|-1$, defined formally in
\cref{sec:projection}---equals
\[
  \max_{|W|\ge2}\dens_x(W)=\frac{5q}{2(4q-1)}
  \longrightarrow\frac58.
\]
Hence no universal theorem asserting $\max_W\dens_x(W)\ge\eta$ for every
normalized half-integral point can hold with $\eta>5/8$.  With unit costs the
whole support is one maximum-density set, so contracting it is a legal
execution of the framework, and its ratio against the displayed point is
$8/5-2/(5q)$.  In other words, there \emph{exists} an execution whose charge
approaches $8/5$.  We do not prove that every choice among maximum-density
sets is bad, and the construction leaves open whether some specified rule for
choosing among them does better.

\subsection{Organization}

Section~\ref{sec:model} defines Forest-BCR, normalization, projection, and
recursive contraction.  Section~\ref{sec:root-structure} develops the exact
degree-two signature and root-support identities.
Section~\ref{sec:cycle-rank} proves the circuit-rank inequality and the
$5/8$ density theorem.  \Cref{sec:barrier} shows that the density bound is
asymptotically tight.  \Cref{app:preprocessing} proves polynomial-time
normalization, \cref{app:inherited} reproduces the proofs of the three
framework lemmas inherited from~\cite{BGT26}, and
\cref{app:barrier-details} contains the remaining barrier
calculations.  A reader interested only in the constant may read
\cref{sec:balance-intro}, \cref{sec:root-structure}, and
\cref{sec:cycle-rank}.
\section{Forest-BCR, normalization, and contraction}\label{sec:model}

\subsection{The relaxation and terminology}

Let $G=(V,E,c)$ be a finite undirected multigraph with nonnegative rational
edge costs encoded in binary.  The demand list $\calP$ is finite, repetitions
are allowed, and every $P\in\calP$ is an unordered pair of distinct vertices.
A vertex is a \emph{terminal} if it is an endpoint of some demand; all other
vertices are \emph{Steiner vertices}.

We may assume that $G$ is connected and simple.  Feasibility forces the two
endpoints of every demand into one connected component of $G$: otherwise, for
any root receiving positive assignment, the component containing an endpoint
but not that root is a valid cut with zero outgoing capacity, contradicting
\eqref{eq:forest-bcr-cuts}.  We may therefore treat the demand-bearing
components independently and output the union of the returned forests.  Within
a component, parallel edges are merged by moving all arc mass of each root and
direction onto a cheapest parallel copy; every cut value is unchanged and the
cost does not increase.

We then pass to the shortest-path completion of the resulting connected
graph, which is a finite pseudometric because zero-cost edges are allowed.
Arc mass is carried to the completion arc on the same ordered pair, so every
cut value is unchanged and the cost cannot increase.  Each completion edge
stores one cheapest path in the preceding graph, so a purchased completion
edge can later be expanded at no extra cost.

During recursion we contract a vertex set to one quotient vertex and take the
shortest-path completion of the quotient.  At the end, purchased edges are
expanded one recursion level at a time, each edge being expanded once; the
result is a set of original edges, so the output has size polynomial in the
input.  An expanded path may enter a fiber contracted earlier, but the
spanning tree bought when that fiber was contracted already connects all of
its vertices, so this introduces no uncharged cost.

Let $\vec E$ contain both arcs $(u,v)$ and $(v,u)$ for every undirected
edge $e=\{u,v\}\in E$, and assign both arcs the edge cost:
$c_{(u,v)}=c_{(v,u)}:=c_e$.  Forest-BCR has assignment variables $z_P^r$ and
arc variables $x_a^r$, where $r\in V$ is a candidate root.  It is
\begin{alignat}{3}
  \min\quad &c(x):=\sum_{r\in V}\sum_{a\in\vec E}c_a x_a^r
  \label{eq:objective}\\[-.4em]
  \text{s.t.}\quad
  &\sum_{r\in V}z_P^r=1
  &&\qquad(P\in\calP),\label{eq:zsum}\\
  &x^r(\delta^+(S))\ge z_P^r
  &&\qquad\bigl(r\in V,\ P\in\calP,\ S\subseteq V\setminus\{r\},\
                    P\cap S\ne\varnothing\bigr),\label{eq:forest-bcr-cuts}\\
  &x,z\ge0.\notag
\end{alignat}
For an arc $a=(u,v)$ we abbreviate $x^r_{(u,v)}$ by $x^r_{uv}$, and
$c_{uv}$ denotes the metric cost of the undirected edge $\{u,v\}$.
We call a set $S$ appearing in \eqref{eq:forest-bcr-cuts}
\emph{valid for $(r,P)$}.  A root $r$ is \emph{participating} if $z_P^r>0$ for some
$P$, that is, if at least one demand sends positive mass to it.
A point is \emph{half-integral} if every coordinate belongs to
$\tfrac12\mathbb Z_{\ge0}$.

For every $z_P^r>0$ and every endpoint $v\in P\setminus\{r\}$, the cut
constraints are equivalent by max-flow/min-cut to an $x^r$-capacitated
$v$--$r$ flow of value $z_P^r$.  The value of Forest-BCR depends on the
chosen demand representation~\cite{BGT26}; all statements below concern the
fixed list $\calP$.

\subsection{Normalization}

Let $\varepsilon>0$.  For pairwise distinct $u,v,w$, the
\emph{$\varepsilon$-split at $(u,v,w)$ in root $r$} decreases
$x_{uv}^r$ and $x_{vw}^r$ by $\varepsilon$ and increases $x_{uw}^r$ by
$\varepsilon$.  It is \emph{feasible} if the resulting point is nonnegative
and satisfies \eqref{eq:forest-bcr-cuts}.  A point is \emph{split-free} if it admits no
feasible positive split, and it is \emph{fully reduced} if decreasing any
single positive arc coordinate by any positive amount destroys feasibility.
It is \emph{normalized} if it is fully reduced, split-free, and every participating
root is a terminal.  Directed splitting off is a special case of the classical
splitting-off operation; see Frank~\cite{Frank11} for general background.

\begin{lemma}[Polynomial-time normalization]\label{lem:normalization}
From every half-integral feasible point of a metric instance one can compute
in polynomial time a normalized half-integral feasible point of no larger
cost.
\end{lemma}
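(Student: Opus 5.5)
We prove \cref{lem:normalization} by a potential-driven sequence of local operations: reroot, shrink, and split. Each operation preserves half-integrality and feasibility and does not increase cost; each step either strictly decreases an integer potential or has a strictly earlier phase. The potential is lexicographic: first the number of participating non-terminal roots, then twice the total arc mass $2\sum_r\sum_a x_a^r$ (a nonnegative integer by half-integrality), then the total length $\sum_r\sum_{(u,v)} x^r_{uv}\,\ell(u,v)$ under a fixed auxiliary positive weighting. Since all arc masses lie in $\tfrac12\mathbb Z$ and every step changes coordinates by $\tfrac12$, the number of steps is polynomial in the encoding size. The bound uses the fact that the instance can be assumed to have polynomially many relevant roots and the initial total mass is polynomially bounded; the latter holds because feasibility needs at most one unit of flow per demand and root, after a preliminary full-reduction pass.

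\textbf{Step 1 (terminal roots).} Suppose a participating root $r$ is a Steiner vertex. For each demand $P$ with $z_P^r>0$, pick an endpoint $t\in P$. By max-flow/min-cut, the other endpoint sends $z_P^r$ units of $x^r$-flow to $r$, and so does $t$, unless $t=r$, which is impossible since $r$ is Steiner. We reverse the $t$--$r$ flow path portion, or rather reassign $z_P^r$ to root $t$ and define $x^t$ from the concatenated flows. Metric completion lets us shortcut, so cost does not increase. This is the standard rerooting argument of \cite{BGT26}. Because the reassigned flows are half-integral path decompositions, half-integrality is kept.

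\textbf{Step 2 (full reduction).} For each positive coordinate $x_a^r$, test whether lowering it by $\tfrac12$ keeps feasibility, and lower it if so. This is enough because feasibility of a decrease by any $\varepsilon\in(0,\tfrac12]$ implies feasibility of the $\tfrac12$ decrease. Min cuts for half-integral capacities, checked at the critical cut values, show that a positive-$\varepsilon$ decrease is feasible exactly when every valid cut containing $a$ has slack at least $\tfrac12$.

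\textbf{Step 3 (splitting).} Run a symmetric test for splits. A split at $(u,v,w)$ in root $r$ can lower a valid-cut capacity in only two membership patterns: $u\in S$, $v\notin S$, $w\in S$; or $v\in S$, $w\notin S$ with $u\notin S$ or related. Feasibility is therefore decided by two minimum-cut computations per pattern, with $u,w$ forced in or out. Again, half-integrality means a positive split is feasible iff the $\tfrac12$-split is feasible. Performing a split decreases total mass by $\tfrac12$ by the triangle inequality, and cost does not increase. After any split or reduction we return to Step 2. Steps 2 and 3 never create new participating roots, so Step 1 is not reentered.

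The main obstacle is Step 3's exactness claim: characterizing when the $\varepsilon$-split violates some cut \eqref{eq:forest-bcr-cuts}, and showing the half-unit discretization is lossless. For the characterization I would compute $\Delta x^r(\delta^+(S))$ for all eight memberships of $u,v,w$ in $S$. Only the patterns where $(u,v)$ or $(v,w)$ leaves $S$ but $(u,w)$ does not are harmful, each with change $-\varepsilon$. For the discretization, all min-cut values are in $\tfrac12\mathbb Z$, so slack is either $0$ or at least $\tfrac12$.
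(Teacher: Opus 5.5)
\textbf{Gap in Step 1 (rerooting).} Your Steps 2 and 3 essentially match the paper. The potential is total arc mass, and the half-integral discretization of coordinate decreases and of splits is sound. Your final description of the harmful patterns, where $(u,v)$ or $(v,w)$ leaves $S$ while $(u,w)$ does not, is exactly the paper's two families $v\in S,\ u,w\notin S$ and $u,w\in S,\ v\notin S$. The genuine gap is Step 1. You reassign each demand $P$ of $r$ separately to one of its own endpoints $t$ and build $x^t$ from the concatenated flows. Different demands can then land at different new roots, each new layer carries its own capacity, and the capacity that $x^r$ shared among all its demands gets paid several times. Metric shortcutting does not repair this, because the problem is duplication, not path length. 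For example, let $r$ be the Steiner center of a unit-cost star with terminal leaves $s_1,s_2,s_3$. Let the demands be $\{s_1,s_2\},\{s_1,s_3\},\{s_2,s_3\}$, all assigned to $r$, and let $x^r_{s_ir}=1$ for $i=1,2,3$. This point is feasible and has cost $3$. No vertex lies in all three demands, so your rule creates at least two new roots. Each of them must route one unit from another leaf, so each layer costs at least $c_{s_is_j}=2$, for a total of at least $4$.

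\textbf{The missing idea, and a runtime issue.} The fix is to move the whole root-$r$ layer and all of its assignments to a single terminal $v$, including demands not incident with $v$. Choose $P^*$ maximizing $z^r_P$, put $\lambda:=z^r_{P^*}$, and take $v\in P^*$. Reverse in place a half-integral $v$--$r$ flow $f$ of value exactly $\lambda$: set $\bar x^r:=x^r-f+f^{\mathrm{rev}}$, $x'^v:=x^v+\bar x^r$, $z'^v:=z^v+z^r$, and $x'^r:=0$, $z'^r:=0$. The cost is unchanged because $c_{(a,b)}=c_{(b,a)}$. Now take a set $S$ valid for $(v,P)$. If $r\notin S$, flow conservation leaves the capacity of $S$ unchanged, and $S$ was already valid for $(r,P)$. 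If $r\in S$, the reversal contributes $\lambda\ge z^r_P$; choosing the maximum is exactly what this case needs. A secondary issue is running time. Lowering a coordinate by $\tfrac12$ per step is only pseudo-polynomial for binary-encoded coordinates. Your bound on the initial mass assumes a preliminary reduction pass that you describe only in half-unit steps. You can compute each maximum feasible decrease directly by forced directed minimum cuts, as the paper does. Afterwards every positive coordinate lies on a tight cut and is at most $1$, so the mass is $O(n^3)$. Alternatively, first cap every coordinate at $1$, which preserves feasibility because every right-hand side is at most $1$. The length-based third component of your potential is unnecessary, since total mass already strictly decreases under every reduction and every split.
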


This normal form is exactly the one used by Byrka, Grandoni, and Traub: what
we call \emph{normalized} is what they call \emph{fully reduced}, namely a
\emph{well structured} point --- no assignment to a nonterminal root and no
feasible split --- that in addition admits no coordinatewise
decrease~\cite[Sec.~2.1 and the proof of Lem.~5]{BGT26}.  Their rerouting
step is~\cite[Lem.~2]{BGT26} and their splitting-off schedule is in the
proof of~\cite[Lem.~3]{BGT26}.  We claim no stronger normalization theorem than
theirs.  Because the whole analysis is carried out on normalized points, we
give a self-contained polynomial-time implementation in
\cref{app:preprocessing}: it combines root rerouting, coordinatewise reduction
by directed minimum cuts, and an exact maximum-feasible-split computation over
both cut-membership patterns in which a split can lower capacity.  Metricity
makes every shortcut no more expensive.

The next lemma packages full reduction in the form we use throughout.  The
flow reading of the cut constraints is already used
in~\cite[proof of Lem.~6]{BGT26}; the covering statement --- that every
positive arc lies on some chosen witness path --- is the part we rely on, and
it is a direct consequence of full reduction rather than a new ingredient.

\begin{lemma}[Path witnesses]\label{lem:witness-paths}
Let $(x,z)$ be fully reduced and let $r$ be a root.  For every triple
$(r,P,v)$ with $z_P^r>0$ and $v\in P\setminus\{r\}$, one can choose a
circulation-free $x^r$-capacitated $v$--$r$ flow of value $z_P^r$, together
with a decomposition into simple $v$--$r$ paths, such that every arc $a$ with
$x_a^r>0$ lies on at least one path chosen for one of these triples.
The capacity constraint is imposed on each witness flow separately; the
witness flows of different triples are not required to be simultaneously
routable, and their sum may well exceed $x^r$.
\end{lemma}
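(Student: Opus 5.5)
The plan is to prove the covering statement one arc at a time and then assemble the witnesses. Fix the root $r$ and an arc $a=(u,w)$ with $x_a^r>0$. Since $(x,z)$ is fully reduced, lowering $x_a^r$ by any $\varepsilon>0$ destroys feasibility. The constraints \eqref{eq:zsum} do not involve $x$, and the only cut constraints that involve $x^r_a$ are those for root $r$. So for every $\varepsilon>0$ some pair $(P,S)$ with $S$ valid for $(r,P)$ has $a\in\delta^+(S)$ and $x^r(\delta^+(S))<z_P^r+\varepsilon$. There are finitely many such pairs, so one pair works for a sequence $\varepsilon\to0$. That pair gives a \emph{tight} set: $S\subseteq V\setminus\{r\}$ with $P\cap S\ne\varnothing$, $a\in\delta^+(S)$, and $x^r(\delta^+(S))=z_P^r$. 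In particular $z_P^r>0$, because $x^r(\delta^+(S))\ge x_a^r>0$. Choose an endpoint $v\in P\cap S$. Then $v\ne r$ since $r\notin S$, so $(r,P,v)$ is one of the triples in the statement.

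Next I take any maximum $x^r$-capacitated $v$--$r$ flow for this triple. By the max-flow/min-cut reading of \eqref{eq:forest-bcr-cuts}, its value is at least $z_P^r$. Since $S$ separates $v$ from $r$ with capacity exactly $z_P^r$, the value equals $z_P^r$. Every cut arc of a minimum cut is saturated by every maximum flow, so this flow sends $x_a^r>0$ units through $a$.

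I then cancel circulations: repeatedly pick a directed cycle of positive flow and subtract its minimum. This keeps the value and feasibility, and the process stops because the support shrinks. Flow conservation across $S$ shows that a cycle crossing $\delta^+(S)$ must also cross $\delta^-(S)$. The value through $\delta^+(S)$ is pinned at full capacity, so the net flow out of $S$ stays $z_P^r=x^r(\delta^+(S))$. That forces the flow on $\delta^-(S)$ to be zero and every arc of $\delta^+(S)$ to stay saturated, including $a$. The resulting acyclic flow decomposes into simple $v$--$r$ paths in the standard way. Some path carries positive flow on $a$, so $a$ lies on a chosen path.

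Finally I assemble the witnesses. Distinct arcs may produce the same triple but different flows, while the statement asks for one flow per triple. I handle this by averaging. For each triple, the flows produced for it are all maximum, circulation-free, and saturate their respective tight arcs. Their convex combination is still an $x^r$-capacitated flow of value $z_P^r$. Before any cancellation it has positive flow on every arc that one of its ingredients used. The delicate point is that cancelling cycles in the average could remove flow from such an arc. This is where I expect the main difficulty. I plan to handle it with the fact that each arc $a$ was chosen as a cut arc of a minimum cut $S_a$ for that very triple. Every maximum flow for the triple, the average included, saturates $\delta^+(S_a)$. The argument above then shows that cycle cancellation keeps $a$ saturated. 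Hence the single circulation-free flow for each triple covers all the arcs assigned to it, and the path decomposition gives the claim. The per-triple capacity remark in the statement needs no argument, since nothing requires summing the witnesses.
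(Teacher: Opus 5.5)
Your proof is correct, but it takes a different route from the paper.

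\textbf{The paper's route.} It argues by contradiction on the flow side. It picks, for every triple, an \emph{arbitrary} witness flow of value exactly $z_P^r$: an integral maximum flow for the doubled capacities, truncated to $2z_P^r$ path units and halved. These witnesses certify every root-$r$ cut constraint, since a set $S$ valid for $(r,P)$ contains some $v\in P\setminus\{r\}$, and the witness for $(r,P,v)$ crosses $S$ with value $z_P^r$. So if a positive arc lay on no witness path, its coordinate could be lowered without disturbing any witness, contradicting full reduction. This takes two sentences, needs no tight sets and no limiting argument, and shows that \emph{every} choice of witnesses covers all positive arcs.

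\textbf{Your route.} You work on the cut side. For each positive arc $a$ you extract a tight valid set $S_a$ with $a\in\delta^+(S_a)$, and then use that maximum flows saturate minimum cuts. This is more informative: it produces exactly the kind of tight-cut certificates listed in Table~\ref{tab:tight-arcs}. The price is the compactness step and extra bookkeeping.

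\textbf{The step you flag as the main difficulty is vacuous.} Once a triple has a tight set, its maximum flow value is exactly $z_P^r$. So \emph{any} flow of value $z_P^r$ for that triple is already maximum, saturates every $\delta^+(S_a)$ assigned to it, and stays maximum after cycle cancellation. Hence a single arbitrary witness per triple suffices, the averaging is unnecessary, and your argument yields the same ``any choice works'' conclusion as the paper's.

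\textbf{A small omission.} For triples with no tight set you still have to choose some flow of value exactly $z_P^r$, for example a truncated maximum flow. This is trivial but should be stated.
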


The idea is short.  Each cut constraint is, by max-flow/min-cut, a flow
requirement, and scaling capacities by two makes those flows integral, so a
path decomposition exists without leaving the half-integral world.  Full
reduction supplies the covering statement: an arc used by no witness could
have its coordinate lowered without disturbing any witness, which is exactly
what full reduction forbids.  The proof is in \cref{app:inherited}.

\subsection{Projection and density}\label{sec:projection}

For every root $r$ and every arc $a=(u,v)$, replace the half-integral
coordinate $x^r_a$ by $2x^r_a$ parallel \emph{half-unit copies} of the
undirected edge $uv$, each oriented from $u$ to $v$ and labeled $r$.  The
multiplicity is an integer because the point is half-integral.  After deleting
isolated vertices, the resulting labeled multigraph is the \emph{projection}
$\wideG=(V(\wideG),\wideE)$, and its vertices are the \emph{support
vertices}.  This convention gives every projected edge an orientation and a
root label even when a single coordinate exceeds $1/2$.  Counting
multiplicity, every support vertex satisfies
\begin{equation}\label{eq:projection-degree}
  d_{\wideG}(v)=2\sum_{r}\bigl(x^r(\delta^+(v))+x^r(\delta^-(v))\bigr),
\end{equation}
so the projected degree is exactly twice the incident LP mass.  We reserve
\emph{root support} for the graphs $D_r$ defined in
\cref{sec:root-structure}.

For $W\subseteq V(\wideG)$ with $|W|\ge2$, put
\[
  x(E[W]):=\sum_{uv\in E[W]}\sum_r(x_{uv}^r+x_{vu}^r),
  \qquad
  \dens_x(W):=\frac{x(E[W])}{|W|-1}
             =\frac{|\wideE[W]|}{2(|W|-1)}.
\]
The denominator is the number of edges a spanning tree of $W$ must buy and the
numerator is the LP mass available to pay for them, so $\dens_x$ is
\emph{LP mass per tree edge}.  In classical terminology,
\[
  2\max_{|W|\ge2}\dens_x(W)
  =\max_{|W|\ge2}\frac{|\wideE[W]|}{|W|-1}
\]
is the fractional arboricity of the projection~\cite{NashWilliams64}.  Two consequences are used repeatedly: a set of
density $\ge\eta$ yields a tree costing at most $1/\eta$ times the enclosed LP
cost (\cref{lem:contraction}); and if $W=V(D)$ for a connected $D$, then
$|W|-1=|E(D)|-\beta(D)$, so \emph{each unit of circuit rank removes one edge
from the bill}.

\begin{remark}[Exact deterministic maximum density]
\label{rem:densest-computation}
An exact maximizer of $\dens_x$ is computable in polynomial time by the
method of Byrka, Grandoni, and Traub~\cite[Lem.~1]{BGT26}.  For a rational value $\gamma$, define
\[
  F_\gamma(W):=\gamma|W|-x(E[W]).
\]
The internal-edge function $x(E[W])$ is supermodular, and hence
$F_\gamma$ is submodular.  After forcing a pair of vertices into $W$, the
inequality $\dens_x(W)\ge\gamma$ is equivalent to
$F_\gamma(W)\le\gamma$; the minimum can therefore be found by polynomial-time
submodular-function minimization~\cite{Schrijver00}.  The possible densities
are among the $O(n|\wideE|)$ ratios $k/(2\ell)$ with
$0\le k\le|\wideE|$ and $1\le\ell\le n-1$, so exact search over this grid is
polynomial in the explicit input length.

Scanning the candidate values in decreasing order, and for each value the
forced pairs in a fixed vertex order, the procedure returns one specific
maximum-density set; the algorithm below always uses that set.  The
approximation analysis is valid for \emph{every} maximum-density choice, so
this convention only serves to make the algorithm deterministic.
\end{remark}

\subsection{Densest-set contraction}

The algorithm below is~\cite[Alg.~1]{BGT26} and the following lemma is
assembled from~\cite[Lem.~4]{BGT26} and the proof
of~\cite[Thm.~1]{BGT26}.  Nothing in this subsection is new.  We restate it
because our contribution is precisely to improve the constant $\eta$ with
which it may be invoked, and because we need the normalization step and the
degenerate cases to be explicit in the induction.

\medskip
\noindent\textbf{Procedure $\Round(G,c,\calP,x,z)$.}
\begin{enumerate}[label=\arabic*.,leftmargin=2.1em,itemsep=.15em,topsep=.25em]
\item If $\calP=\varnothing$, return the empty forest.
\item Normalize $(x,z)$ using \cref{lem:normalization}.
\item Let $W$ be the maximum-density set returned by \cref{rem:densest-computation}.
\item Buy a minimum spanning tree on $W$.
\item Contract $W$, delete demands made internal, and recurse; on return,
      expand all predecessor paths, take the union, and delete cycles.
\end{enumerate}

\begin{lemma}[Densest-set contraction]\label{lem:contraction}
Let $\eta>0$.  Suppose every normalized half-integral feasible point of an
instance with at least one demand admits a set $W$, $|W|\ge2$, with
$\dens_x(W)\ge\eta$.  Then $\Round$ returns a feasible Steiner forest of cost at most
$c(x)/\eta$.
\end{lemma}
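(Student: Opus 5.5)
We prove \cref{lem:contraction} by strong induction on the number of vertices of the metric instance, or on $|V|$ plus the number of demands. If $\calP=\varnothing$ the empty forest costs $0\le c(x)/\eta$. Otherwise Step~2 replaces $(x,z)$ by a normalized half-integral feasible point $(x',z')$ with $c(x')\le c(x)$ (\cref{lem:normalization}). Since at least one demand is present, the hypothesis provides a set of density at least $\eta$, so the maximum-density set $W$ chosen in Step~3 satisfies $|W|\ge2$ and $x'(E[W])\ge\eta(|W|-1)$.

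\textbf{Cost of the tree on $W$.} We bound the MST on $W$ against the LP mass inside $W$, and we do this edge by edge for the metric cost. The cleanest route uses the partition formulation of the spanning-tree polytope. Write $y_e:=\sum_r(x'^r_{uv}+x'^r_{vu})$ for $e=uv\in E[W]$. Maximality of the density of $W$ gives $y(E[W'])\le \dens(W)(|W'|-1)$ for every $W'\subseteq W$ with $|W'|\ge 2$. Hence $y/\dens(W)$ lies in the forest polytope on $W$, and since $y(E[W])=\dens(W)(|W|-1)$, the point $y/\dens(W)$ has total mass $|W|-1$ and is in the spanning-tree polytope. Integrality of that polytope then gives $c(\mathrm{MST}(W))\le c(y)/\dens(W)\le c_{E[W]}(x')/\eta$, where $c_{E[W]}(x')$ is the LP cost carried by arcs with both ends in $W$.

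\textbf{Contraction and recursion.} Contract $W$ to a vertex $w$, delete demands that became internal, and map $(x',z')$ to the quotient. Arcs inside $W$ disappear, arcs with one end in $W$ are redirected to $w$, and a root $r\in W$ is replaced by $w$. We then take the shortest-path completion, which moves mass to cheaper or equal arcs as in the preprocessing paragraph. Every valid cut $S$ for $(r,P)$ in the quotient lifts to a valid cut in $G$: replace $w$ by $W$ and keep the root side consistent. The lifted cut has the same outgoing capacity, so the quotient point is feasible. It is still half-integral. Its cost is at most $c(x')-c_{E[W]}(x')$, because the internal arcs are dropped and merging parallel edges only lowers cost.

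The quotient has fewer vertices, so by induction the recursive call returns a feasible forest $F'$ of cost at most $(c(x')-c_{E[W]}(x'))/\eta$. The output $F'$, expanded, together with $\mathrm{MST}(W)$ connects every demand: demands internal to $W$ are connected by the tree, and every other demand is connected in the quotient. An expanded path through $w$ is completed by the tree on $W$, and the same holds for earlier fibers by the argument in the preprocessing paragraph. Expansion adds no cost, and deleting cycles only lowers cost. Summing the two bounds gives $c(F)\le c(x')/\eta\le c(x)/\eta$.

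\textbf{Main obstacle.} The delicate step is the MST bound. Once maximality of $W$ is recognized as the inequality that places $y/\dens(W)$ in the spanning-tree polytope, the rest is bookkeeping. A secondary check is that cut lifting respects valid-cut conditions when the root lies in $W$ or when $P$ meets $W$; this follows because $S$ avoids $w$ exactly when its lift avoids $W$.
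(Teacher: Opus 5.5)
Your proof is correct and follows the paper's argument essentially step for step: you use the same Edmonds spanning-tree-polytope charge justified by maximality of $W$, the same quotient obtained by contracting $W$ and lifting each quotient cut to its full preimage, and the same induction on the number of vertices, with the normalized point's cost bounding the total. Two cosmetic corrections: the inequalities you actually use are the subtour (forest) description of the spanning-tree polytope, not the partition formulation; and when several roots lie in $W$, the quotient layer at $w$ is the sum of their layers, so feasibility comes from summing the old cut inequalities over all roots in $W$ (as the paper does) rather than from the lifted cut having ``the same'' capacity.
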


Only the reason it works matters for the sequel, and it fits in one paragraph.
Write $\varrho:=\dens_x(W)\ge\eta$ and divide the LP mass inside $W$ by
$\varrho$.  The resulting vector has total weight exactly $|W|-1$, and because
$W$ has \emph{maximum} density, no $Q\subseteq W$ receives more than $|Q|-1$;
these are precisely Edmonds' spanning-tree polytope inequalities, so a minimum
spanning tree on $W$ costs at most $1/\varrho\le1/\eta$ times the LP mass
buried inside $W$.  That mass then disappears from the recursive bill, because
contracting $W$ deletes exactly the arcs internal to it.  Contraction also
preserves feasibility, since a valid cut in the quotient pulls back to a valid
cut upstairs.  Induction on the number of vertices gives the factor $1/\eta$
globally.  The full argument, including quotient half-integrality, the
expansion of contracted paths, and the final cycle deletion, is reproduced in
\cref{app:inherited}.

\subsection{Degree normal form}

\begin{lemma}[Projection degree normal form]\label{lem:degree-normal-form}
Let $(x,z)$ be normalized.  Every vertex of $\wideG$ has degree at least two,
and every Steiner vertex of $\wideG$ has degree at least three.
\end{lemma}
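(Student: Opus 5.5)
The plan is to prove both claims by local arguments at a single support vertex $v$. Each argument uses one of the three normalization properties: full reduction, split-freeness, or the restriction that participating roots are terminals. By \eqref{eq:projection-degree}, $d_{\wideG}(v)$ is twice the incident LP mass. Every coordinate is half-integral, so the degree is a nonnegative integer. Because $v$ is a support vertex, it is incident with at least one positive arc.

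\emph{Degree at least two.} Suppose $d_{\wideG}(v)=1$. Then exactly one half-unit copy is incident with $v$: either $x^r_{uv}=\tfrac12$ or $x^r_{vu}=\tfrac12$ for a single root $r$, and nothing else is incident with $v$.

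By \cref{lem:witness-paths}, this arc lies on some simple witness path for $r$, running from an endpoint $w\in P$ to $r$.
\begin{itemize}[leftmargin=1.8em,itemsep=.2em]
\item If $v$ is an interior vertex of that path, the path both enters and leaves $v$. That needs two incident arcs, which is impossible.
\item If $v=r$, the path ends at $v$ with an entering arc, so the arc is $(u,v)$. Then $r=v$ receives only $\tfrac12$ of inflow. A triple with $z^r_P>0$ still makes sense, and since $r$ is participating it is a terminal, so $v$ is a terminal. I then need a contradiction some other way; this subcase is easy to handle because $v$ is a terminal of some demand $P'$. The cut constraint for $P'$ with $S=\{v\}$ and any root $r'\neq v$ that $P'$ uses forces outflow $x^{r'}(\delta^+(v))\ge z^{r'}_{P'}$. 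Summing over such roots, and adding the inflow needed at $v$ if $v$ is itself a root of $P'$ (via $S=V\setminus\{v\}$), shows that the total incident mass at a terminal is at least $1$. So a terminal has degree at least two.
\item If $v$ is the start $w$ of the path, then $v$ is again a terminal, and the same bound applies.
\end{itemize}
Thus a vertex of degree one would have to be a terminal, but every terminal has degree at least two. The terminal bound is the key computation. Demand $P'=\{v,v'\}$ satisfies $\sum_{r'}z^{r'}_{P'}=1$. For roots $r'\ne v$, use the singleton cut $\{v\}$ to get outflow in $x^{r'}$. For $r'=v$, use the valid set $S=V\setminus\{v\}$, which contains $v'$, to get inflow $x^{v}(\delta^-(v))\ge z^v_{P'}$. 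Summing gives incident mass at least $1$, hence degree at least $2$.

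\emph{Steiner vertices have degree at least three.} Let $v$ be a Steiner vertex. Since $v$ is not a terminal, it is neither a path start nor a participating root. Every witness path through $v$ therefore passes through it, entering along an arc of some root $r$ and leaving along an arc of the same root $r$. So for each root with positive arcs at $v$ there is both an entering and a leaving arc. In particular $d\ge2$, and degree two means exactly one entering copy $(u,v)$ and one leaving copy $(v,w)$, both with root $r$ and each with value $\tfrac12$.

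It remains to rule out degree exactly two. If $u=w$, the witness paths are simple and so cannot use $u\to v\to u$. The arcs still must lie on witness paths (entering $v$ from $u$ and leaving to $u$ is impossible on a simple path), so this case contradicts \cref{lem:witness-paths}.

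If $u\ne w$, I claim the $\tfrac12$-split at $(u,v,w)$ in root $r$ is feasible, contradicting split-freeness. Consider any cut $S$ valid for $(r,P)$.
\begin{itemize}[leftmargin=1.8em,itemsep=.2em]
\item If $v\notin S$: the split can lower $x^r(\delta^+(S))$ only if $u\in S$ and $w\notin S$. In that case $(u,v)$ leaves $S$, so we lose $\tfrac12$ and gain $\tfrac12$ from $(u,w)$, and the capacity does not decrease.
\item If $v\in S$: since $v$ is Steiner, $S\setminus\{v\}$ is still valid as long as it meets $P$, and it does, because $v\notin P$. The only arc leaving $v$ is $(v,w)$, and the only arc entering is $(u,v)$. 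A case check on whether $u,w\in S$ shows that the capacity of $S$ after the split is at least the capacity of $S\setminus\{v\}$ before the split, which is at least $z^r_P$.
\end{itemize}

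This cut-by-cut verification of the split is the main obstacle. I would organize it by the four membership patterns of $(u,w)$ relative to $S$, mirroring the two harmful patterns mentioned for \cref{app:preprocessing}.
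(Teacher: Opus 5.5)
Your plan matches the paper's proof: the terminal bound from the cuts $\{v\}$ (roots $r\ne v$) and $V\setminus\{v\}$ (root $v$), witness paths forcing a Steiner vertex to be an interior path vertex, and a contradiction with split-freeness at a Steiner vertex of degree two via a toggled cut. Routing the first claim through a hypothetical degree-one vertex is unnecessary, since every support vertex is a terminal or a Steiner vertex, but it is valid.

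The gap is in the split verification, which you correctly flag as the crux. For $v\notin S$ you assert that the split can lower $x^r(\delta^+(S))$ only when $u\in S$ and $w\notin S$, and that the loss is compensated there. That is backwards. With $v\notin S$ the arc $(v,w)$ never counts. The half-unit on $(u,v)$ is lost whenever $u\in S$, and the new half-unit on $(u,w)$ compensates exactly when $w\notin S$. The net change is $-\tfrac12\,\mathbf 1[u\in S]\,\mathbf 1[w\in S]$. So every valid $S$ with $u,w\in S$ and $v\notin S$ loses one half; this is the second family in \eqref{eq:two-negative-families}. Nothing in your argument shows that such an $S$ had slack, so as written, feasibility of the split is not established.

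The repair uses the toggle you already have, applied in the other direction. For such $S$, the set $S\cup\{v\}$ still avoids $r$, because $r$ is a participating root and hence a terminal while $v$ is Steiner. It still meets $P$, so it is valid. Both arcs at $v$ are internal to it, and $(u,w)$ is internal to $S$ before and after the split. Hence its pre-split capacity equals the post-split capacity of $S$.

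Your $v\in S$ bullet also overstates. When $u,w\in S$, the arc $(u,v)$ leaves $S\setminus\{v\}$, so the pre-split capacity of $S\setminus\{v\}$ exceeds the post-split capacity of $S$ by $\tfrac12$, and your claimed inequality is false there. In that subcase, however, the split does not change the capacity of $S$, so no comparison is needed.

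The clean organization is the paper's. Read off from \eqref{eq:split-change} that only the patterns $v\in S,\ u,w\notin S$ and $u,w\in S,\ v\notin S$ lower the capacity. In exactly those two patterns, toggle $v$ and note that the old capacity of the toggled set equals the new capacity of $S$.
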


This is~\cite[Lem.~6 and Claim~1 of its proof]{BGT26}, and it is what makes
``degree two'' the interesting case.  Witness paths must enter and leave a Steiner
vertex, which gives degree at least two; a Steiner vertex of degree exactly
two carries one unit of mass on two half-arcs that a path traverses in
sequence, and splitting them off is then feasible, contradicting
split-freeness.  Terminals have degree at least two because the singleton and
complementary-singleton cuts already demand one full unit of incident mass.
The proof is in \cref{app:inherited}.

A vertex of projected degree two is called \emph{low}; a support vertex of
degree at least three is \emph{high}.

\begin{remark}[Simple-projection dictionary]\label{rem:simple-dictionary}
By \cref{lem:degree-normal-form}, low vertices are exactly the support
vertices of minimum possible degree.  If $\wideG$ has two parallel edges
between $u$ and $v$, then $\dens_x(\{u,v\})\ge1$.  Hence only simple projections are difficult.  In a
simple projection of a normalized point, every low vertex is a terminal, every
positive coordinate equals $1/2$, and each projected edge has a unique
orientation and root label.
\end{remark}
\section{Degree-two signatures and root supports}\label{sec:root-structure}

\paragraph{Standing assumption.}
Throughout \cref{sec:root-structure,sec:cycle-rank} we assume that the
projection $\wideG$ is \emph{simple}.  This is without loss of generality for
\cref{thm:density}: by \cref{rem:simple-dictionary}, a pair of parallel
projected edges already exhibits a two-vertex set of density at least one.
Under this assumption every positive coordinate equals $1/2$, and every
projected edge carries exactly one orientation and one root label; all objects
introduced below are defined under this convention.

The first new ingredient is that a low terminal determines the complete
root-assignment vector of all its incident demands.  The outcome is the
dichotomy of \cref{lem:low-types}; \cref{sec:cycle-rank} converts it into
$\beta_C\ge L_C/2$ and then, by a two-case degree count, into the universal
$5/8$ bound.

\begin{lemma}[Low-terminal signature]\label{lem:signature}
Let $v$ be a low terminal and define
\[
  \sigma_v(r):=
  \begin{cases}
    x^r(\delta^+(\{v\})),&r\ne v,\\
    x^v(\delta^-(\{v\})),&r=v.
  \end{cases}
\]
For every demand $P$ incident with $v$ and every root $r$,
\[
  z_P^r=\sigma_v(r).
\]
\end{lemma}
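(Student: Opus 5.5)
The plan is a squeeze argument: show that $\sigma_v$ dominates $z_P^{\,\cdot}$ coordinatewise, and that both vectors have total mass exactly one. The lower bounds come from two families of cut constraints in \eqref{eq:forest-bcr-cuts}. The upper bound on total mass comes from the degree identity \eqref{eq:projection-degree}.

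Fix a demand $P$ with $v\in P$, and write $P=\{v,u\}$ with $u\ne v$.

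First take a root $r\ne v$. The singleton $S=\{v\}$ avoids $r$ and meets $P$, so it is valid for $(r,P)$. Hence $x^r(\delta^+(\{v\}))\ge z_P^r$, that is, $\sigma_v(r)\ge z_P^r$.

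Next take $r=v$. The complementary singleton $S=V\setminus\{v\}$ avoids the root $v$ and contains $u$, so it is also valid. Since $\delta^+(V\setminus\{v\})=\delta^-(\{v\})$, this gives $x^v(\delta^-(\{v\}))\ge z_P^v$, i.e.\ $\sigma_v(v)\ge z_P^v$.

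Both inequalities hold for every $r$. When $z_P^r=0$ they hold trivially because $x\ge0$, so no case distinction on participation is needed.

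Summing over all roots and using \eqref{eq:zsum} gives
\[
  \sum_r\sigma_v(r)\;\ge\;\sum_r z_P^r\;=\;1 .
\]
On the other side, $v$ is low, so $d_{\wideG}(v)=2$. By \eqref{eq:projection-degree}, the total incident mass $\sum_r\bigl(x^r(\delta^+(v))+x^r(\delta^-(v))\bigr)$ therefore equals $1$. Each term $\sigma_v(r)$ is a sub-sum of this total, taking only the out-mass for $r\ne v$ and only the in-mass for $r=v$. Hence $\sum_r\sigma_v(r)\le1$.

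So both sums equal $1$. Since the termwise inequalities $\sigma_v(r)\ge z_P^r$ all point the same way, each must hold with equality, which gives $z_P^r=\sigma_v(r)$ for every $r$.

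The same squeeze yields a by-product I would record for later use: all incident mass not counted by $\sigma_v$ vanishes. That is, $x^r(\delta^-(\{v\}))=0$ for $r\ne v$, and $x^v(\delta^+(\{v\}))=0$.

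I do not expect a real obstacle. The only points needing care are the following.
\begin{itemize}
\item The complementary singleton is a legitimate valid set for the root $v$. This uses that demands are pairs of distinct vertices, so $P$ meets $V\setminus\{v\}$.
\item The degree identity counts both directions and all labels. This is what makes $\sum_r\sigma_v(r)\le1$ immediate.
\end{itemize}
Neither simplicity of $\wideG$ nor half-integrality is needed for this lemma. Only lowness of $v$ and feasibility are used, together with the fact that \eqref{eq:forest-bcr-cuts} is imposed for all $S$ and all roots $r\in V$.
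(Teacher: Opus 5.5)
Your proof is correct and is essentially the paper's argument. The singleton cut $\{v\}$ for roots $r\ne v$ and the complementary singleton $V\setminus\{v\}$ for the root $v$ give $\sigma_v(r)\ge z_P^r$, and the degree-two identity caps $\sum_r\sigma_v(r)$ at one, which forces equality in every coordinate (your by-product that the uncounted incident mass vanishes is exactly what \cref{cor:signature-orientation} uses next).
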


\begin{proof}
For $r\ne v$, apply \eqref{eq:forest-bcr-cuts} to $S=\{v\}$.  For $r=v$,
apply it to $S=V\setminus\{v\}$; the other endpoint of $P$ lies in this set.
Thus $\sigma_v(r)\ge z_P^r$ for every $r$.  The arc sets counted by the
coordinates of $\sigma_v$ are pairwise disjoint and lie within the total
incident mass, which is one because $v$ has projected degree two.  Therefore
\[
  1\ge\sum_r\sigma_v(r)\ge\sum_r z_P^r=1,
\]
so every coordinate inequality is tight.
\end{proof}

\begin{corollary}[No freedom at degree two]\label{cor:signature-orientation}
All demands incident with a low terminal $v$ have the same assignment vector.
This vector is either a unit vector $\mathbf e_r$ or has exactly two
nonzero entries, at roots $r$ and $s$, each equal to one half.  Moreover:
\begin{enumerate}[label=(\roman*),leftmargin=1.8em]
\item if the vector is $\mathbf e_r$ and $v\ne r$, both incident half-arcs
      leave $v$ and are labeled $r$;
\item if the vector is $\mathbf e_v$, both incident half-arcs enter $v$ and
      are labeled $v$;
\item in the split case, one incident half-arc is labeled $r$ and one is
      labeled $s$; the arc of label $v$, if $v\in\{r,s\}$, enters $v$, and
      every other one leaves $v$.
\end{enumerate}
\end{corollary}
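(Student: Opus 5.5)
The plan is to read everything off \cref{lem:signature} together with the tightness of the chain $1\ge\sum_r\sigma_v(r)\ge\sum_r z^r_P=1$ established in its proof.

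\textbf{Common vector and its shape.} Since $z_P^r=\sigma_v(r)$ for every demand $P$ incident with $v$, and $\sigma_v$ depends only on $v$, all such demands share the assignment vector $\sigma_v$. Under the standing simple-projection assumption, $v$ has exactly two incident projected edges. Each edge is a half-arc of value $1/2$ with a unique orientation and root label. Thus $\sigma_v$ takes values in $\{0,\tfrac12,1\}$, and its entries sum to $\sum_r z_P^r=1$ by \eqref{eq:zsum}. Hence $\sigma_v$ is either a unit vector $\mathbf e_r$ or has exactly two entries equal to one half.

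\textbf{Where the incident mass sits.} Tightness of the first inequality $1\ge\sum_r\sigma_v(r)$ says the counted arc sets exhaust the total incident mass. So every incident half-arc is counted by some coordinate of $\sigma_v$. Concretely, each incident half-arc is one of two kinds:
\begin{itemize}
\item an arc leaving $v$ with label $r\ne v$; or
\item an arc entering $v$ with label $v$.
\end{itemize}
No other orientation–label combination can occur, since such an arc would carry incident mass outside $\sum_r\sigma_v(r)$ and make the first inequality strict.

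\textbf{The three cases.} Each case now follows by matching labels to nonzero coordinates.
\begin{enumerate}[label=(\roman*),leftmargin=1.8em]
\item If $\sigma_v=\mathbf e_r$ with $r\ne v$, both half-arcs must carry label $r$, so both leave $v$.
\item If $\sigma_v=\mathbf e_v$, both half-arcs carry label $v$, so both enter $v$.
\item In the split case, each nonzero coordinate equals $\tfrac12$, so it accounts for exactly one half-arc. Hence one half-arc is labeled $r$ and the other $s$. The one labeled $v$, if any, enters $v$, and any other leaves $v$.
\end{enumerate}

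The only delicate point is justifying that no incident half-arc escapes the accounting, for instance an arc entering $v$ with a label $r\ne v$. This is exactly the content of tightness in $1\ge\sum_r\sigma_v(r)$, combined with disjointness of the counted arc sets. I would state that step explicitly; the rest is bookkeeping.
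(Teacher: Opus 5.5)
Your proposal is correct and follows the paper's argument: the common vector comes from \cref{lem:signature}, its two possible shapes from half-integrality and \eqref{eq:zsum}, and the labels and orientations from tightness of $1\ge\sum_r\sigma_v(r)$, which forces every incident half-arc into one of the disjoint counted arc sets. You spell out the per-case bookkeeping that the paper's two-line proof leaves implicit, which is fine.
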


\begin{proof}
Half-integrality and \eqref{eq:zsum} give the two possible vectors.  The proof
of \cref{lem:signature} accounts for the entire unit of incident mass, so no
other label or orientation is possible.
\end{proof}

For a participating root $r$, let $D_r$ be the oriented graph consisting of the
projected edges labeled $r$, their endpoints, and the root $r$.

\begin{lemma}[Rootward minimal supports]\label{lem:root-support}
For every participating root $r$:
\begin{enumerate}[label=(\roman*),leftmargin=1.8em]
  \item $D_r$ is connected;
  \item no arc leaves $r$;
  \item every other vertex of $D_r$ has outdegree at least one; and
  \item
  \[
    \beta(D_r):=|E(D_r)|-|V(D_r)|+1
    =\sum_{v\in V(D_r)\setminus\{r\}}\bigl(d^+_{D_r}(v)-1\bigr).
  \]
\end{enumerate}
Every projected edge is labeled by a participating root.
\end{lemma}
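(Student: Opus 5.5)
The whole argument rests on the path witnesses of \cref{lem:witness-paths}, applied to the fully reduced normalized point under the standing simple-projection assumption. Fix a participating root $r$. Every arc labeled $r$ lies on some simple witness path running from an endpoint $v\in P\setminus\{r\}$ to $r$, where $z_P^r>0$. Each such path begins at a vertex of $D_r$ and ends at $r$, using only arcs of label $r$, so every vertex of $D_r$ is joined to $r$ inside $D_r$. The root $r$ itself lies in $D_r$ by definition. This gives (i).

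For (ii), suppose an arc $(r,u)$ labeled $r$ exists. It lies on some witness path. That path is simple and ends at $r$, so it cannot leave $r$ earlier, which is a contradiction. For (iii), take any vertex $w\ne r$ of $D_r$. It is the endpoint of some arc labeled $r$, and that arc lies on a simple witness path ending at $r$. Since $w\ne r$, the path passes through $w$ (or starts there) and must leave $w$ along an arc of label $r$. Hence $d^+_{D_r}(w)\ge1$.

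For (iv), count edges by their tails. By (ii), every edge of $D_r$ has its tail in $V(D_r)\setminus\{r\}$, so
\[
|E(D_r)|=\sum_{v\ne r}d^+_{D_r}(v)=\sum_{v\ne r}\bigl(d^+_{D_r}(v)-1\bigr)+|V(D_r)|-1 .
\]
Rearranging gives the identity. Connectedness from (i) is what makes $\beta(D_r)$ the circuit rank; the identity itself holds by the definition used. Statement (iii) shows that every summand is nonnegative, which is how the identity will be used later.

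For the final sentence, take a projected edge labeled $r$, so $x^r_a>0$. By \cref{lem:witness-paths}, the arc $a$ lies on a witness path for some triple $(r,P,v)$ with $z_P^r>0$, so $r$ is participating. The only delicate point in the whole proof is this step: the covering clause of \cref{lem:witness-paths} has to be read as applying to every root $r$ with $x^r\ne0$, including a root with no assignment. In that case there are no triples, so full reduction would let the coordinate $x^r_a$ be decreased, contradicting full reduction. I would state this explicitly rather than lean on the lemma's phrasing.
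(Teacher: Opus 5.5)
Your proof is correct and follows the paper's own argument. Both derive (i)--(iii) from the covering clause of \cref{lem:witness-paths}, using simple witness paths that end at $r$; both obtain (iv) by summing outdegrees over the nonroot vertices; and both settle the final sentence by observing that a nonzero root layer with no positive assignment could be reduced, which contradicts full reduction. Your version only adds detail the paper leaves implicit, such as why a simple path that ends at $r$ cannot use an arc leaving $r$.
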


\begin{proof}
By \cref{lem:witness-paths}, every positive root-$r$ arc lies on a simple
path ending at $r$.  The union of those paths is connected, no arc leaves the
root, and every nonroot support vertex has an outgoing arc.  Summing
outdegrees counts the edges, and subtracting one for every nonroot vertex
gives (iv).  Informally, (iv) says that each nonroot vertex must spend one
outgoing arc merely to make progress toward $r$---together a spanning tree's
worth---so that \emph{circuit rank equals total excess outdegree}.  We call
such a support \emph{rootward}: every nonroot vertex has a directed path to
$r$ and no arc leaves $r$.  In particular, $\beta(D_r)=0$ exactly when the
rootward support is an arborescence directed toward $r$.  Finally, if $x^r\ne0$ but all assignments $z_P^r$ were zero, the
entire root layer could be deleted, contradicting full reduction.
\end{proof}

A demand is \emph{unit-assigned} if $z_P^r=1$ for one root, and
\emph{split-assigned} if it has value one half at two roots.  These are the
only possibilities.

\begin{lemma}[Unit assignments create cycles]\label{lem:unit-cycle}
If $z_P^r=1$, then $\beta(D_r)\ge1$.
\end{lemma}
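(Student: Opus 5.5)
The plan is to find a single nonroot vertex of $D_r$ with outdegree at least two, and then read off circuit rank from the identity in \cref{lem:root-support}(iv).

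Write $P=\{u,w\}$. The endpoints of a demand are distinct, so at least one of them, say $v$, is different from $r$. The set $S=\{v\}$ lies in $V\setminus\{r\}$ and meets $P$, so it is valid for $(r,P)$. Applying \eqref{eq:forest-bcr-cuts} to it gives
\[
  x^r(\delta^+(\{v\}))\ \ge\ z_P^r=1 .
\]
Under the projection convention, each unit of $x^r$ on an arc leaving $v$ becomes two half-unit copies. These copies are oriented away from $v$ and labeled $r$. Hence $v$ has at least two outgoing edges in $D_r$, counted with multiplicity, so $v\in V(D_r)$ and $d^+_{D_r}(v)\ge2$. Under the standing simplicity assumption these are two distinct arcs of value $1/2$ each. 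The counting does not use simplicity, however, because $D_r$ was defined from projected edges with multiplicity.

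Now invoke \cref{lem:root-support}. Since $z_P^r=1>0$, the root $r$ is participating, so the lemma applies to $D_r$. By (iii), every summand $d^+_{D_r}(y)-1$ with $y\neq r$ is nonnegative. By (iv), $\beta(D_r)$ equals the sum of these summands. That sum is therefore at least the single term $d^+_{D_r}(v)-1\ge1$, which gives $\beta(D_r)\ge1$.

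No step is a real obstacle; the argument is a direct combination of a singleton cut with the rootward identity. The only points needing care are these. First, the chosen endpoint must be a nonroot vertex, which the distinctness of demand endpoints guarantees. Second, the singleton cut at a non-root endpoint is always valid, even if the other endpoint is $r$ itself. Third, (iii) is needed so that no other vertex contributes a negative excess that could cancel the excess at $v$.
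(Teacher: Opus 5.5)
Your proof is correct, but it takes a different route from the paper's. The paper argues along a flow. It doubles the root-$r$ capacities, extracts an integral two-unit $v$--$r$ flow, and decomposes it into two arc-disjoint paths. Simplicity of the projection makes these paths edge-disjoint in the undirected support, so their union contains an explicit cycle of $D_r$. Passing from that cycle to $\beta(D_r)\ge1$ tacitly uses connectivity of $D_r$ from \cref{lem:root-support}(i).

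You stay at the single vertex $v$. The singleton cut $\{v\}$ already forces two outgoing half-unit copies labeled $r$, and the rootward identity in \cref{lem:root-support}(iii)--(iv) turns this excess outdegree directly into circuit rank. This avoids any flow computation and does not need the simplicity assumption, since the projection convention handles multiplicities. It also localizes the circuit rank at the endpoint, which is exactly the accounting the paper itself uses later in Claim~2 of \cref{lem:half-cycle}. It even gives $\beta(D_r)\ge2$ when neither endpoint of $P$ is $r$.

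In exchange, the paper's route exhibits a concrete cycle through $v$ and $r$. From \cref{lem:root-support} it needs only connectivity, not the rootward properties (ii)--(iii). Both arguments ultimately rest on full reduction through \cref{lem:witness-paths}.
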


\begin{proof}
Choose $v\in P\setminus\{r\}$.  Scale the root-$r$ capacities by two.  Every
$v$--$r$ cut has integral capacity at least two, while
\cref{rem:simple-dictionary} makes every positive scaled arc have capacity
one.  Compute an integral maximum flow, retain an exact two-unit subflow from
its path decomposition, and discard circulations.  The result decomposes into
two arc-disjoint $v$--$r$ paths.  A simple
projection has at most one positive directed arc on each undirected pair, so
the paths are also edge-disjoint in the undirected support.  They join the
same distinct vertices, so their nonempty union contains a cycle in $D_r$.
\end{proof}

Define the \emph{split-root graph} $J$ as the multigraph whose vertices are
the participating roots and in which every split demand assigned to $r,s$ gives one
edge $rs$.  Call a root \emph{unit-bearing} if some demand is assigned to it with value one; it may also participate in split assignments.  For a
connected component $C$ of $J$, set
\[
  D_C:=\bigcup_{r\in C}D_r,
  \qquad m_C:=|E(D_C)|,
  \qquad \beta_C:=m_C-|V(D_C)|+1.
\]
The union is connected: if an edge $rs$ of $J$ comes from a split demand $P$,
both endpoints of $P$ lie in $V(D_r)\cap V(D_s)$; connectivity of $J[C]$
then gives connectivity of $D_C$.

\begin{lemma}[The two low-vertex types]\label{lem:low-types}
Let $v\in V(D_C)$ be globally low in the full projection $\wideG$.  Exactly
one of the following holds; see \cref{fig:low-types}.
\begin{enumerate}[label=(\roman*),leftmargin=1.8em]
  \item \emph{Unit-low.}  Every demand incident with $v$ is assigned
  integrally to one root $r\in C$.  Both incident edges have label $r$, and
  $v$ belongs to no other root support.  If $v\ne r$, both arcs leave $v$.

  \item \emph{Split-low.}  Every demand incident with $v$ is split between
  two roots $r,s\in C$.  One incident edge has label $r$, the other label
  $s$, and $v$ belongs exactly to $D_r$ and $D_s$.  Moreover $rs\in E(J)$.
\end{enumerate}
Every low vertex belongs to a unique component of $J$.
\end{lemma}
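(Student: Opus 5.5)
The proof reads both cases off \cref{lem:signature} and \cref{cor:signature-orientation}, applied to the low vertex $v$, and then settles membership in root supports. The first step is to show that $v$ is a terminal, so that it has incident demands. By \cref{rem:simple-dictionary}, in a simple projection every low vertex is a terminal. Hence there is at least one demand $P$ incident with $v$. By \cref{cor:signature-orientation}, all such demands share one assignment vector $\sigma_v$, which is either $\mathbf e_r$ or $\tfrac12(\mathbf e_r+\mathbf e_s)$ with $r\ne s$. These two alternatives are mutually exclusive, which gives the ``exactly one'' part. In the split case every incident demand is split between $r$ and $s$, so each such demand yields an edge $rs$ of $J$. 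Thus $rs\in E(J)$, and $r$ and $s$ lie in the same component of $J$.

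Next comes membership. By \cref{lem:root-support}, every projected edge is labeled by a participating root, and $D_q$ consists of the $q$-labeled edges, their endpoints, and $q$. A vertex $v$ therefore lies in $D_q$ exactly when some incident edge has label $q$ or $v=q$. The incident edges are known from \cref{cor:signature-orientation}. In the unit case both edges carry label $r$, with the stated orientations. In the split case one edge is labeled $r$ and the other $s$.

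The one subtle point is to rule out that $v$ is itself a root $q$ whose label appears on neither incident edge. I would handle this as follows. Suppose $v=q$ is a participating root. Then some demand $P'$ has $z_{P'}^q>0$, and by \cref{lem:root-support} the graph $D_q$ is connected with no arc leaving $q$. If $D_q$ has an edge, that edge has some endpoint $u\ne q$, and $u$ has an outgoing arc by \cref{lem:root-support}(iii). Following outgoing arcs through the rootward support eventually enters $q$ along a $q$-labeled arc, so $v$ has an incident edge with label $q$. Hence $q\in\{r,s\}$. If instead $D_q$ has no edge, then for $P'$ with $v\in P'$ we would have $z_{P'}^v>0$. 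The cut $S=V\setminus\{v\}$ would then need positive $x^v$ capacity, a contradiction. So $P'$ must have an endpoint $u\ne v$. But the validity of $S=\{u\}$ forces $x^v(\delta^+(u))>0$, so $D_q$ does have an edge after all.

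With this settled, $v$ lies in exactly $D_r$ in the unit case and in exactly $D_r$ and $D_s$ in the split case. For $v\ne r$, the orientation claims are \cref{cor:signature-orientation}(i) and (iii).

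Finally, I would establish uniqueness of the component and that the relevant roots lie in $C$. By hypothesis $v\in V(D_C)$, so $v\in D_q$ for some $q\in C$. Every root support containing $v$ is one of $D_r$ or $D_s$, so $q\in\{r,s\}$. In the split case $r$ and $s$ are adjacent in $J$, so both lie in $C$. Any other component $C'$ with $v\in V(D_{C'})$ would likewise contain $r$ or $s$, and hence would equal $C$.

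I expect the only real obstacle to be the nonroot/root bookkeeping in the previous paragraph. Everything else is a direct reading of the corollary.
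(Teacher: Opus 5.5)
Your proof is correct and follows essentially the paper's route. You read both cases off \cref{cor:signature-orientation}, and you exclude a participating root $q=v$ whose label is missing from the two incident edges by using that $P'$ has an endpoint other than $v$, so that a valid cut forces a $q$-labeled arc at $v$, which is also how the paper settles it.

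Your version of that exclusion step is more roundabout than it needs to be. The cut $S=V\setminus\{v\}$ alone already gives $x^v(\delta^-(v))\ge z^v_{P'}>0$, with no case split on whether $D_q$ has an edge. Your sentence about following outgoing arcs until you enter $q$ is not justified by \cref{lem:root-support}(iii), because a greedy walk can cycle. It should instead invoke connectivity of $D_q$ or a witness path from \cref{lem:witness-paths}.
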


\begin{proof}
The assignment alternatives and the incident labels follow from
\cref{cor:signature-orientation}.  In the unit case, membership in a support
with another label would require another incident projected edge.  Suppose
$v$ were itself a participating root $s$ distinct from the labels already present.
Some demand $P'$ has $z_{P'}^s>0$, and an endpoint
$p\in P'\setminus\{s\}$ exists because demand endpoints are distinct.
Feasibility gives a positive $x^s$-flow from $p$ into $s=v$, hence an
$s$-labeled arc incident with $v$, contradicting the two existing labels.
Thus $v$ belongs only to the stated support.

In the split case, the two incident edges already account for the entire
incident mass, and the same participating-root argument excludes a third support.
Any demand incident with $v$ gives the edge $rs$ of $J$.  The final statement
follows because a unit-bearing root lies in one component of $J$, while the two roots
of a split assignment are adjacent in $J$.
\end{proof}

\begin{figure}[t]
\centering
\begin{tikzpicture}[>=Latex,scale=.92,
  low/.style={circle,draw,inner sep=1.5pt,minimum size=7mm},
  high/.style={circle,draw,fill=black!10,inner sep=1.5pt,minimum size=7mm},
  lab/.style={font=\small}]
  \node[high] (a) at (0,0) {$p$};
  \node[low]  (v) at (1.7,0) {$v$};
  \node[high] (b) at (3.4,0) {$w$};
  \draw[->] (v) -- node[above,lab] {$r$} (a);
  \draw[->] (v) -- node[above,lab] {$r$} (b);
  \node[lab] at (1.7,-.75) {unit-low, $v\ne r$};

  \node[high] (c) at (6.1,0) {$p$};
  \node[low]  (w) at (7.8,0) {$v$};
  \node[high] (d) at (9.5,0) {$w$};
  \draw[->] (w) -- node[above,lab] {$r$} (c);
  \draw[->] (w) -- node[above,lab] {$s$} (d);
  \node[lab] at (7.8,-.75) {split-low, $v\notin\{r,s\}$};
\end{tikzpicture}
\caption{A nonroot unit-low vertex contributes one to the circuit-rank
identity of one support.  A split-low vertex contributes one to the overlap
of two supports.  If the low vertex equals a root, the edge of that root
label is directed into it.}
\label{fig:low-types}
\end{figure}

As a running example, \cref{fig:running} shows the family of
\cref{sec:barrier} at $q=3$: every $t_i$ and every $r_i$ is split-low with
roots $r_i,r_{i+1}$, there are no unit-low vertices, the split-root graph is a
$q$-cycle, and its single component has $L_C=2q$ and $\beta_C=q+1$.  
It also gives a compact way to read \cref{lem:low-types} off a picture: color
each projected edge by the root it serves.  In this particular example
\emph{the low vertices are exactly the vertices at which the color changes},
while high vertices lie in the interior of a single color class.  In general
only one half of that slogan survives: a split-low vertex is always a color
change, whereas a unit-low vertex sees a single color, and a high vertex may
well lie in several supports.

\begin{figure}[t]
\centering
\begin{tikzpicture}[>=Latex,scale=.80,
  cir/.style={circle,draw,thick,inner sep=0pt,minimum size=6.6mm,
              font=\small,fill=white},
  low/.style={cir,line width=1.05pt},
  high/.style={cir,fill=black!7,draw=Slate},
  e/.style={-{Latex[length=2.6mm,width=2.1mm]},line width=1.15pt,
            shorten >=1.1mm,shorten <=1.1mm},
  eA/.style={e,RootA},
  eB/.style={e,RootB,densely dashed},
  eC/.style={e,RootC,densely dotted,line width=1.5pt}]
\node[high] (a0) at (-3.250,0.573) {$a_0$};
\node[high] (b0) at (-2.121,2.528) {$b_0$};
\node[high] (a1) at (2.121,2.528)  {$a_1$};
\node[high] (b1) at (3.250,0.573)  {$b_1$};
\node[high] (a2) at (1.129,-3.101) {$a_2$};
\node[high] (b2) at (-1.129,-3.101){$b_2$};
\node[low]  (r0) at (0,3.3)        {$r_0$};
\node[low]  (r1) at (2.858,-1.65)  {$r_1$};
\node[low]  (r2) at (-2.858,-1.65) {$r_2$};
\node[low]  (t0) at (0,1.42)       {$t_0$};
\node[low]  (t1) at (1.230,-0.71)  {$t_1$};
\node[low]  (t2) at (-1.230,-0.71) {$t_2$};
\draw[eA] (r2)--(a0); \draw[eA] (a0)--(b0);
\draw[eA] (b0)--(r0); \draw[eA] (t0)--(a0);
\draw[eA] (t2)--(b0);
\draw[eB] (r0)--(a1); \draw[eB] (a1)--(b1);
\draw[eB] (b1)--(r1); \draw[eB] (t1)--(a1);
\draw[eB] (t0)--(b1);
\draw[eC] (r1)--(a2); \draw[eC] (a2)--(b2);
\draw[eC] (b2)--(r2); \draw[eC] (t2)--(a2);
\draw[eC] (t1)--(b2);
\draw[eA] (3.95,2.75)--(4.75,2.75);
\node[font=\small,anchor=west] at (4.85,2.75) {arcs of $D_{r_0}$};
\draw[eB] (3.95,2.30)--(4.75,2.30);
\node[font=\small,anchor=west] at (4.85,2.30) {arcs of $D_{r_1}$};
\draw[eC] (3.95,1.85)--(4.75,1.85);
\node[font=\small,anchor=west] at (4.85,1.85) {arcs of $D_{r_2}$};
\node[font=\small,align=left,anchor=west,text width=45mm] at (3.9,0.6)
  {Demands $P_i=\{r_i,t_i\}$, split one half each between $r_i$ and
   $r_{i+1}$; every arc carries $\tfrac12$.};
\node[font=\small,align=left,anchor=west,text width=45mm] at (3.9,-1.9)
  {Bold circles: degree $2$ (low).  Gray: degree $3$ (high).
   Here $n=12$, $m=15$, $\beta=4$, $L/m=\tfrac25$.};
\end{tikzpicture}
\caption{The family of \cref{sec:barrier} at $q=3$, used as a running example.
Each color class is a rootward arborescence pointing at its own root; high
vertices are monochromatic; in this example the low vertices are exactly the
color changes.  The arrows show orientations and root labels; the underlying
undirected projection is a $9$-cycle with three length-two handles, so
$\beta=1+3=4$.}
\label{fig:running}
\end{figure}

\section{Circuit rank and the \texorpdfstring{$5/8$}{5/8} density theorem}\label{sec:cycle-rank}

We now turn the local dichotomy into a global count.  The proof has two
ledgers: circuit rank internal to one root support, and vertex overlap among
different supports.

Define the \emph{overlap surplus} of a component $C$ by
\[
  \iota_C:=\sum_{r\in C}|V(D_r)|-|V(D_C)|.
\]
Equivalently, $\iota_C$ counts each vertex of $D_C$ once for every root
support beyond the first that contains it; in particular every summand of that
count is nonnegative.

It is convenient to fix, once and for all, the counters attached to a
component $C$ of $J$:
\begin{equation}\label{eq:component-counters}
\begin{aligned}
  a_C&:=\#\{\text{unit-bearing roots in }C\}, &\qquad
  u_C&:=\#\{\text{unit-low vertices of }D_C\},\\[1pt]
  L_C&:=\#\{\text{low vertices of }D_C\}, &\qquad
  s_C&:=\#\{\text{split-low vertices of }D_C\},
\end{aligned}
\end{equation}
where ``low'' always means low in the full projection $\wideG$.  By
\cref{lem:low-types} every low vertex of $D_C$ is of exactly one of the two
types, so $L_C=u_C+s_C$.  All four counters refer to the component $C$ only;
the number of root supports in $C$ is written $|C|$ throughout.
\begin{figure}[t]
\centering
\begin{tikzpicture}[scale=.88,font=\small,
  bl/.style={ellipse,draw,thick,minimum width=28mm,minimum height=15mm},
  sh/.style={circle,fill=RootB,inner sep=1.5pt}]
\node[bl,draw=RootA,fill=TintA] (A) at (0,0) {};
\node[bl,draw=RootC,fill=TintC] (B) at (2.2,0) {};
\node[sh] at (1.1,0) {};
\node at (1.1,-1.35) {one shared vertex};
\node[RootA] at (1.1,-1.85) {$0$ new cycles};
\node[bl,draw=RootA,fill=TintA] (C) at (7.8,0) {};
\node[bl,draw=RootC,fill=TintC] (D) at (10.0,0) {};
\node[sh] at (8.9,.40) {}; \node[sh] at (8.9,-.40) {};
\node at (8.9,-1.35) {two shared vertices};
\node[RootB] at (8.9,-1.85) {$1$ new cycle};
\draw[->,thick,RootB] (8.9,.40) to[bend right=55] (8.9,-.40);
\draw[->,thick,RootB] (8.9,-.40) to[bend right=55] (8.9,.40);
\end{tikzpicture}
\caption{The union of two connected, \emph{edge-disjoint} graphs whose vertex
sets meet in exactly $t\ge1$ vertices has circuit rank equal to the sum of
their circuit ranks plus $t-1$: one shared vertex is spent on connectivity,
and every further one contributes an independent cycle.  Edge-disjointness is
what the root labels supply, since each projected edge carries one label.
This is the content of \eqref{eq:overlap-identity}, where $|C|-1$ of the
$\iota_C$ shared vertices are consumed to connect the $|C|$ supports.}
\label{fig:gluing}
\end{figure}

The two supports glued in \cref{fig:gluing} illustrate the mechanism.
Because root labels partition projected edges, the connected root supports
have pairwise disjoint edge sets, and
\begin{equation}\label{eq:overlap-identity}
  \beta_C
  =\sum_{r\in C}\beta(D_r)+\iota_C-(|C|-1).
\end{equation}
Indeed, substituting the definition of $\iota_C$,
$m_C=\sum_{r\in C}|E(D_r)|$ (labels partition the edges), and
$\beta(D_r)=|E(D_r)|-|V(D_r)|+1$ (each support is connected by
\cref{lem:root-support}) into both sides proves \eqref{eq:overlap-identity}.

\begin{lemma}[Low vertices force half a cycle]\label{lem:half-cycle}
For every component $C$ of the split-root graph,
\[
  \beta_C\ge\frac{L_C}{2}.
\]
\end{lemma}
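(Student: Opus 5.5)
The plan is to start from the gluing identity \eqref{eq:overlap-identity},
\[
  \beta_C=\sum_{r\in C}\beta(D_r)+\bigl(\iota_C-(|C|-1)\bigr),
\]
and to prove two separate inequalities. The first is $\sum_{r\in C}\beta(D_r)\ge u_C/2$, the second is $\iota_C-(|C|-1)\ge s_C/2$. Since $L_C=u_C+s_C$ by \cref{lem:low-types}, adding them gives the claim. The first ledger uses only internal excess outdegree. The second uses only overlap between supports.

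\emph{Unit-low vertices.} Fix a root $r\in C$ and let $u_r$ be the number of unit-low vertices carrying label $r$; these are disjoint across roots by \cref{lem:low-types}(i).
\begin{itemize}
\item Every such vertex $v\ne r$ has $d^+_{D_r}(v)=2$ by \cref{cor:signature-orientation}(i). It therefore contributes $1$ to the sum in \cref{lem:root-support}(iv), and every other summand there is nonnegative by (iii).
\item At most one of the $u_r$ vertices can be $r$ itself, so $\beta(D_r)\ge u_r-1$.
\item If $u_r\ge1$, the demands at such a vertex are unit-assigned to $r$. Then \cref{lem:unit-cycle} gives $\beta(D_r)\ge1$.
\end{itemize}
Hence $\beta(D_r)\ge\max\{u_r-1,\,1\}\ge u_r/2$ whenever $u_r\ge1$. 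Summing over $r\in C$ gives $\sum_r\beta(D_r)\ge u_C/2$.

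\emph{Split-low vertices.} I will encode the overlap as a multigraph $K$ on vertex set $C$. For each vertex $w\in V(D_C)$, let $R(w)$ be the set of roots whose supports contain $w$, and add to $K$ the edges of some spanning tree on $R(w)$; this contributes $|R(w)|-1$ edges. Then $|E(K)|=\iota_C$. Moreover $K$ is connected, because $D_C$ is connected and the supports are edge-disjoint, so any path in $D_C$ passing from one support to another does so at a shared vertex. Consequently $\iota_C-(|C|-1)=\beta(K)$, and it remains to show $\beta(K)\ge s_C/2$.

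By \cref{lem:low-types}(ii), a split-low vertex $v$ with roots $r,s$ contributes exactly one edge $rs$ to $K$. A demand $P$ at $v$ has a second endpoint $p\ne v$, and by \cref{lem:signature} it is assigned one half to $r$ and one half to $s$. Hence $p\in V(D_r)\cap V(D_s)$, so $\{r,s\}\subseteq R(p)$. I will group split-low vertices by their root pair $\{r,s\}$; let $k_{rs}$ be the size of the class.
\begin{itemize}
\item If the partner $p$ is itself split-low, it lies in the same class. Such vertices pair up into parallel edges, so a class with $k_{rs}\ge2$ supplies at least $k_{rs}-1\ge k_{rs}/2$ independent cycles once one edge per class is placed in a spanning tree.
\item If $k_{rs}=1$, the partner $p$ is high. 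Here I choose the spanning tree on $R(p)$ to contain the edge $rs$, so that $v$ and $p$ again form a parallel pair.
\end{itemize}

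The main obstacle is this last choice. A single high vertex $p$ may be the partner for several root pairs at once, and the spanning tree on $R(p)$ cannot contain all the corresponding edges whenever those pairs form a cycle in $R(p)$. My first attempt is to argue directly by cycle counting instead of fixing trees:
\begin{enumerate}
\item Contract each pair class to a single edge to obtain a connected multigraph $K'$.
\item Observe that every edge $rs$ of $K'$ coming from a singleton class is not a bridge, since $R(p)$ connects $r$ and $s$ through edges contributed by $p$.
\item Bound the number of singleton classes by twice the number of independent cycles they close, using the fact that a high partner $p$ serving $t$ distinct root pairs contributes $|R(p)|-1$ edges, at least $t/2$ of which must be added beyond what the singleton edges already connect.
\end{enumerate}
If this accounting turns out to be too lossy, I will fall back on charging each singleton split-low vertex one half to the circuit in $D_r\cup D_s$ formed by two witness paths from $v$ and $p$. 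This would use \cref{lem:witness-paths} in the same way as in the proof of \cref{lem:unit-cycle}.
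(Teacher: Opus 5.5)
There is a genuine gap in the split-low ledger. Your unit-low ledger is correct. Per root you get $\beta(D_r)\ge\max\{u_r-1,1\}\ge u_r/2$ whenever $u_r\ge1$, hence $\sum_{r\in C}\beta(D_r)\ge u_C/2$. Reducing the lemma via \eqref{eq:overlap-identity} to this plus $\iota_C-(|C|-1)\ge s_C/2$ is also valid, and that second inequality is true. Your bookkeeping differs from the paper's: the paper pairs $\sum_r\beta(D_r)\ge a_C$ with $\iota_C\ge2(|C|-1)$, pairs $\sum_r\beta(D_r)\ge u_C-a_C$ with $\iota_C\ge s_C$, and averages the two resulting bounds, whereas you average within each ledger. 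But you never prove $\beta(K)\ge s_C/2$. Forcing the tree on $R(p)$ to contain $rs$ fails, as you note, and neither replacement is an argument. Step~3's assertion that a partner serving $t$ pairs must add ``at least $t/2$'' new edges is unjustified. The fallback of charging half a circuit of $D_r\cup D_s$ to each singleton vertex gives no control over independence of the circuits charged by different vertices. Those circuits are, moreover, just the overlap already counted in $\iota_C$.

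The missing step is short either way. The paper's route uses two facts. First, $\iota_C\ge s_C$, since each split-low vertex lies in exactly two supports. Second, $\iota_C\ge2(|C|-1)$: add supports along a spanning tree of $J[C]$; the two distinct endpoints of the split demand behind each tree edge lie both in the new support and in the current union, so each addition adds at most $|V(D_r)|-2$ vertices. Together these give $\iota_C-(|C|-1)\ge\max\{|C|-1,\ s_C-(|C|-1)\}\ge s_C/2$. Within your own framework the obstacle is illusory. You do not need the tree $T_p$ on $R(p)$ to contain $rs$, only to connect $r$ and $s$, which every spanning tree of $R(p)$ does. Let $E_S$ be the $s_C$ edges contributed by split-low vertices. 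High partners are not split-low, so every $T_p$ lies in $K-E_S$. Adding $E_S$ to $K-E_S$ raises the rank $|C|-(\text{number of components})$ by nothing for a singleton class, and by at most one for a class with $k_{rs}\ge2$. The total increase is therefore at most $s_C/2$. Hence the rank of $K$ is at most $(\iota_C-s_C)+s_C/2$. Since $K$ is connected its rank is $|C|-1$, so $|C|-1\le\iota_C-s_C/2$, which is exactly the inequality you were missing.
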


\begin{proof}
We use the counters \eqref{eq:component-counters} and establish four
simultaneous inequalities.

\smallskip
\noindent\textbf{Claim 1:}
$\sum_{r\in C}\beta(D_r)\ge a_C$.
Every unit-bearing root has circuit rank at least one by
\cref{lem:unit-cycle}.

\smallskip
\noindent\textbf{Claim 2:}
$\sum_{r\in C}\beta(D_r)\ge u_C-a_C$.
A unit-low vertex lies in the support of a unit-bearing root.  At most one such
vertex per unit-bearing root can coincide with that root; every other unit-low vertex has
outdegree two and contributes one to the nonnegative sum in
\cref{lem:root-support}(iv).

\smallskip
\noindent\textbf{Claim 3:}
$\iota_C\ge s_C$.
Every split-low vertex lies in exactly two supports, while every term
in the definition of $\iota_C$ is nonnegative.

\smallskip
\noindent\textbf{Claim 4:}
$\iota_C\ge2(|C|-1)$.
Add the supports along a spanning tree of $J[C]$.  Whenever a new support is
added, the two distinct endpoints of the split demand corresponding to its
parent tree edge lie both in the new support and in the current union
(as observed before \cref{lem:low-types}, each endpoint of a split demand
lies in both root supports).  Thus
its addition increases the union's vertex count by at most
$|V(D_r)|-2$.  Equivalently,
$\sum_r|V(D_r)|-|V(D_C)|\ge2(|C|-1)$.

Claims 1 and 4 in \eqref{eq:overlap-identity} give
$\beta_C\ge a_C+|C|-1$; Claims 2 and 3 give
$\beta_C\ge L_C-a_C-(|C|-1)$.  The maximum of these two lower bounds is at least
their average, $L_C/2$.
\end{proof}

\begin{theorem}[Universal $5/8$ density]\label{thm:density}
Every normalized half-integral Forest-BCR point for an instance with at least
one demand admits a set $W$, $|W|\ge2$, such that
\[
  \dens_x(W)\ge\frac58.
\]
\end{theorem}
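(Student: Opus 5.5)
First, dispose of the non-simple case: if $\wideG$ has two parallel edges between some $u,v$, then by \cref{rem:simple-dictionary} the pair $W=\{u,v\}$ has $\dens_x(W)\ge1\ge5/8$. So assume $\wideG$ simple, and use the standing conventions of \cref{sec:root-structure}. The support is nonempty because there is a demand, so some root participates and carries positive mass. Set $n:=|V(\wideG)|$, $m:=|\wideE|$, and $L:=$ the number of low vertices. By \cref{lem:root-support} every projected edge lies in some $D_r$. The components $C$ of $J$ partition the edges, so $m=\sum_C m_C$. By \cref{lem:low-types} every low vertex lies in exactly one $D_C$, so $L=\sum_C L_C$. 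The plan is the two-case threshold argument of \cref{sec:balance-intro} with $\theta=2/5$.

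\textbf{Case 1: some component has $L_C\ge\frac25 m_C$.} Take $W:=V(D_C)$. It has at least two vertices, since it contains both endpoints of a demand assigned to a root of $C$. Because $D_C$ is connected,
\[
|W|-1=m_C-\beta_C\le m_C-\tfrac{L_C}{2}\le m_C-\tfrac15 m_C=\tfrac45 m_C,
\]
where the first inequality is \cref{lem:half-cycle}. The edges of $D_C$ are among $\wideE[W]$, so
\[
\dens_x(W)\ge \frac{m_C}{2\cdot\frac45 m_C}=\frac58 .
\]

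\textbf{Case 2: every component has $L_C<\frac25 m_C$.} Summing over components gives $L<\frac25 m$. Take $W:=V(\wideG)$. By \cref{lem:degree-normal-form}, every non-low vertex has degree at least $3$, so
\[
2m=\sum_v d(v)\ge 2L+3(n-L)=3n-L>3n-\tfrac25 m .
\]
Hence $\frac{12}{5}m>3n$, that is, $m>\frac54 n$. Then
\[
\dens_x(W)=\frac{m}{2(n-1)}>\frac{m}{2n}>\frac58,
\]
and $n\ge2$ holds as before.

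The main obstacle is bookkeeping rather than a new idea: the partition identities $m=\sum_C m_C$ and $L=\sum_C L_C$. Edges can never be shared because each edge carries a single label. Low vertices could in principle lie in two components, and the final statement of \cref{lem:low-types} is exactly what rules this out. High vertices may lie in several $D_C$, but Case 2 never counts them per component, so that is harmless. A minor check is that $|W|\ge2$ in Case 1, which follows from demand endpoints being distinct.
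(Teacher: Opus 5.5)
Your proof is correct and follows the paper's argument. It uses the same reduction to simple projections, the same Case~1 certificate $W=V(D_C)$ via $|W|-1=m_C-\beta_C$ and $\beta_C\ge L_C/2$, the same partition identities $m=\sum_C m_C$ and $L=\sum_C L_C$, and the same Case~2 global degree count; the only difference is that the paper carries a general threshold $\theta$ and sets $\theta=2/5$ at the end, whereas you substitute it from the start.
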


This is the only statement of the analysis that we change.  It has exactly the
hypotheses and the shape of~\cite[Lem.~5]{BGT26}, which gives $9/16$ in
place of $5/8$; substituting it into \cref{lem:contraction} replaces $16/9$
by $8/5$ and leaves every other step of~\cite{BGT26} untouched.

\begin{proof}
A parallel projected pair gives density at least one, so assume the projection
is simple.  Let $m:=|\wideE|$, $n:=|V(\wideG)|$, and let $L$ be the total
number of low vertices.  We run the computation of
\cref{sec:balance-intro} with a general threshold $\theta\in(0,1)$ and only at
the end set $\theta=2/5$.

\smallskip
\noindent\emph{Case 1: some component $C$ has $L_C\ge\theta m_C$.}
Take $W=V(D_C)$, which has $|W|\ge2$ because $D_C$ has an edge.  The induced
projection contains all $m_C$ edges of the connected graph $D_C$ (possibly
more, if edges labeled by roots outside $C$ also have both endpoints in $W$;
this only helps), and $|W|-1=m_C-\beta_C$.  Using \cref{lem:half-cycle},
\[
  \dens_x(W)
  \ \ge\ \frac{m_C}{2(m_C-\beta_C)}
  \ \ge\ \frac{m_C}{2(m_C-L_C/2)}
  \ \ge\ \frac{m_C}{2\bigl(m_C-\tfrac{\theta m_C}{2}\bigr)}
  \ =\ \frac{1}{2-\theta},
\]
which is \eqref{eq:intro-local}.

\smallskip
\noindent\emph{Case 2: $L_C<\theta m_C$ for every component.}
Root labels partition the projected edges
(\cref{lem:root-support}), and \cref{lem:low-types} says that the components
of $J$ partition the low vertices; hence $\sum_C m_C=m$, $\sum_C L_C=L$ and
therefore $L<\theta m$.  Since low vertices have degree two and all other
support vertices at least three, the degree sum gives
\[
  2m\ \ge\ 2L+3(n-L)\ =\ 3n-L\ >\ 3n-\theta m,
\]
so $n<(2+\theta)m/3$ and, taking $W=V(\wideG)$ (again $|W|\ge2$),
\[
  \dens_x(V(\wideG))
  \ =\ \frac{m}{2(n-1)}
  \ >\ \frac{m}{2n}
  \ >\ \frac{3}{2(2+\theta)}.
\]

\smallskip
\noindent
The first bound increases in $\theta$ and the second decreases, so the
guarantee $\min\bigl\{\tfrac1{2-\theta},\tfrac3{2(2+\theta)}\bigr\}$ is
maximized where they agree.  By \eqref{eq:crossing} that is $\theta=2/5$, with
common value $5/8$.
For every finite point, Case~2 is strict.  Consequently the infimum, over
normalized half-integral points with at least one demand, of the maximum
projected density is at least $5/8$; \cref{thm:barrier} shows that it equals
$5/8$.
\end{proof}

\begin{proof}[Proof of \cref{thm:main}]
If $c(x)=0$, every metric edge carrying positive LP mass has cost zero.  The
same procedure buys only zero-cost edges and returns a zero-cost forest, so
assume $c(x)>0$.

By \cref{sec:model} we may treat one demand-bearing connected component at a
time and take the union of the returned forests, so assume $G$ connected.
Pass to the shortest-path completion, which preserves feasibility and does not
increase $c(x)$, and run \Round, normalizing at every nonempty call.
By \cref{thm:density} every call has a set of density at least $5/8$, so
\cref{lem:contraction} with $\eta=5/8$ gives $c(F)\le\frac85c(x)$.  Expanding
the purchased completion edges does not increase the cost.

It remains to observe that the running time is polynomial.  Every subroutine
used is polynomial-time: shortest paths and quotient metrics; directed
minimum cuts and maximum flows, for normalization; and submodular-function
minimization, for the density decision.  Each is called polynomially often.
Indeed, \cref{app:preprocessing} performs polynomially many successful splits
and polynomially many coordinate reductions and reroutings, each of which
requires polynomially many min-cut or max-flow computations; one density
computation ranges over $O(n^2)$ forced pairs and a candidate grid of
$O(n|\wideE|)$ values; and every recursive call contracts at least two
vertices, so there are at most $n-1$ nonempty calls, while the current numbers
of vertices and demand occurrences never exceed their initial values.  All
numbers occurring are rationals of bit length polynomial in the input length
and the number of operations performed.  We make no attempt to optimize the
exponent.
\end{proof}

\section{The projected-density bound is asymptotically tight}\label{sec:barrier}

The constant $5/8$ in \cref{thm:density} cannot be replaced by any larger
constant.  All indices in this section are modulo $q$.

\subsection{A heuristic from approximate equality conditions}
\label{sec:derive}

The construction can be motivated by asking how the two cases of
\cref{thm:density} could be nearly tight simultaneously.  One expects roughly
$L=2m/5$, globally high vertices of degree three, few or no unit assignments,
and split-root supports that contribute almost exactly half a circuit per low
vertex.  The smallest symmetric rootward support with these features has five
half-arcs: three source streams merge through two degree-three Steiner
vertices into one root.

A tree-shaped split-root blueprint has a leaf defect.  Its \emph{whole-set}
density is below $5/8$, so \cref{thm:density} would force a denser proper
subset; this does not make the blueprint impossible, but it prevents the
whole projection from being extremal.  Closing the root interactions into a
cycle removes the leaves and yields the family below.  This discussion is
only a design heuristic.  Feasibility, normalization, and exact maximum
density are proved independently.

\subsection{The construction and its verification}

\begin{lemma}[Equal-marginal density certificate]
\label{lem:equal-marginal-certificate}
Let $H=(U,F)$ be connected.  If a distribution on spanning trees of $H$
includes every edge with probability at least $\rho$, then
\[
  \frac{|F[W]|}{|W|-1}\le\frac1\rho
  \qquad (W\subseteq U,\ |W|\ge2).
\]
If all edge marginals equal $\rho$, the whole vertex set attains equality.
\end{lemma}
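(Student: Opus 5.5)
The plan is a double-counting argument that compares the expected number of tree edges inside $W$ with the maximum a spanning tree can place there. Let $\mathcal T$ be a random spanning tree drawn from the given distribution and fix $W\subseteq U$ with $|W|\ge2$. Every spanning tree $T$ of $H$ is a forest, so its restriction to $W$ is also a forest, and therefore
\[
  |E(T)\cap F[W]|\le |W|-1 .
\]
This holds for every tree in the support of the distribution.

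Next we take expectations. By linearity,
\[
  \mathbb E\,|E(\mathcal T)\cap F[W]|=\sum_{e\in F[W]}\Pr[e\in\mathcal T]\ \ge\ \rho\,|F[W]| .
\]
Combining this with the forest bound gives $\rho|F[W]|\le|W|-1$. If $\rho>0$, dividing yields the claimed inequality. If $\rho=0$, the right-hand side $1/\rho$ is infinite and there is nothing to prove.

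For the equality statement, take $W=U$. Every spanning tree has exactly $|U|-1$ edges, all of them in $F[U]=F$. Hence
\[
  |U|-1=\mathbb E\,|E(\mathcal T)|=\sum_{e\in F}\Pr[e\in\mathcal T]=\rho|F| ,
\]
which gives $|F|/(|U|-1)=1/\rho$.

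No step here is a real obstacle. The only points to handle are noting that multiedges are counted with multiplicity, consistent with $F[W]$ being a multiset in the projection, and observing that any tree restricted to $W$ is acyclic. In the paper's density normalization the statement reads $\dens_x(W)\le 1/(2\rho)$. The substantive work in the barrier section will then be exhibiting a spanning-tree distribution with equal marginals on the construction.
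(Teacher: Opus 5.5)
Your proposal is correct and follows essentially the same argument as the paper: bound $|E(T)\cap F[W]|\le|W|-1$ for every tree, sum the edge marginals to get $\rho|F[W]|\le|W|-1$, and use $|E(T)|=|U|-1$ for equality at $W=U$. Your remark on $\rho=0$ is a harmless addition that the paper leaves implicit.
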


\begin{proof}
For every $W$ and every spanning tree $T$,
$|E(T)\cap F[W]|\le |W|-1$.  Summing edge marginals gives
$\rho|F[W]|\le\mathbb E[|E(T)\cap F[W]|]\le|W|-1$.  For $W=U$, equality follows from
$|E(T)|=|U|-1$.
\end{proof}

\begin{theorem}[Exact density barrier]\label{thm:barrier}
For every integer $q\ge3$, there is an instance whose costs are the
shortest-path metric of a unit-cost graph, and a normalized half-integral
Forest-BCR point whose projection is simple with $4q$ vertices and $5q$ edges
and satisfies
\[
  \max_{|W|\ge2}\dens_x(W)=\frac{5q}{2(4q-1)}.
\]
In particular, the maximum projected density tends to $5/8$.
\end{theorem}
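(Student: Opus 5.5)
The plan is to generalize the running example of \cref{fig:running} from $q=3$ to arbitrary $q$, verify feasibility and normalization directly, and certify the exact maximum density with \cref{lem:equal-marginal-certificate}.

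\textbf{Construction.} Take vertices $r_i,t_i,a_i,b_i$ for $i\in\mathbb Z_q$ and demands $P_i=\{r_i,t_i\}$. Set $z_{P_i}^{r_i}=z_{P_i}^{r_{i+1}}=\tfrac12$. The root support $D_{r_i}$ consists of the five half-arcs
\[
  r_{i-1}\to a_{i-1},\quad
  t_{i-1}\to a_{i-1},\quad
  a_{i-1}\to b_{i-1},\quad
  t_{i-2}\to b_{i-1},\quad
  b_{i-1}\to r_i ,
\]
with indices shifted exactly as in \cref{fig:running}. Each support is then a rootward arborescence that merges three sources through two degree-three Steiner vertices. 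The underlying unit-cost graph $H$ is the simple projection itself, with $4q$ vertices and $5q$ edges. It is a $3q$-cycle $r_i a_i b_i r_{i+1}$ with $q$ handles of length two through the $t_i$. The costs are its shortest-path metric.

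\textbf{Feasibility.} For each root $r$ and each demand $P$ with $z_P^r>0$, every endpoint of $P$ other than $r$ needs a half-unit directed path to $r$ inside $D_r$. Since every arc carries $\tfrac12$ and $D_r$ is an in-arborescence to $r$ containing the sources $t_{i}$, $t_{i-1}$ and $r_{i-1}$, this check is immediate. It covers $P_{i-1}$ at root $r_i$ (endpoints $r_{i-1},t_{i-1}$) and $P_i$ at root $r_i$ (endpoint $t_i$; here I must place $t_i$ in $D_{r_i}$ and fix the indexing accordingly).

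\textbf{Normalization.} Participating roots are terminals by construction. Full reduction holds because every arc is the unique path from some required source to its root, so lowering it cuts a flow of value $\tfrac12$. Split-freeness is the delicate part and I expect it to be the main obstacle. A split at $(u,v,w)$ in root $r$ is only available at a Steiner vertex $a$ or $b$, or at a through-vertex $r_i$. Because the input mass is $\tfrac12$ per arc, any split removes the whole arc $u\to v$. Some source whose only route passes through $u\to v\to w'$ with $w'\ne w$ then loses its path, or, when $w$ is the continuation, the bypassed vertex $v$ is itself a source ($t$ or $r$) that loses its outflow. I would check these finitely many local patterns, which are identical for every $i$ by symmetry, using the cut constraints: exhibit a valid set $S$ whose out-capacity drops below $\tfrac12$ after the split. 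Allowing metric shortcut arcs $uw$ does not help, because the cut argument is purely about the support.

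\textbf{Exact density.} The upper bound comes from \cref{lem:equal-marginal-certificate}. I want a distribution on spanning trees of $H$ in which every edge has marginal exactly $\rho=(4q-1)/(5q)$. Spanning trees delete $\beta=q+1$ edges. One candidate is to delete one edge from each handle and one further edge, chosen so that the whole $3q$-cycle plus handles becomes acyclic. Averaging over the rotational symmetry $i\mapsto i+1$ and over the reflections mixes edge orbits. There are several edge orbits (cycle edges $ra$, $ab$, $br$ and handle edges), so I would solve a small linear system over mixtures of a few tree types to equalize the marginals at $1-(q+1)/(5q)$. Equality for $W=V$ gives the value $5q/(2(4q-1))$, since $\dens_x(V)=5q/(2(4q-1))$ directly. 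The limit $5/8$ is then immediate.
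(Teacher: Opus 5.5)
\textbf{Gap: the upper bound on density is never proved.} The core of the theorem is the bound $\dens_x(W)\le 5q/(2(4q-1))$ for every $W$. You reduce it to finding a distribution on spanning trees with all marginals $(4q-1)/(5q)$, but you do not build one. This step cannot be skipped or obtained abstractly. By Edmonds' description of the spanning-tree polytope, such a distribution exists if and only if the whole vertex set is already a maximum-density set, so the reduction is equivalent to the claim itself.

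The one tree type you name also cannot do the job. Deleting one edge of every handle plus one cycle edge always removes exactly $q$ of the $2q$ handle edges, so handle edges are deleted with average probability $1/2$. The target deletion probability is $(q+1)/(5q)\le 4/15$. You therefore need trees that delete many cycle edges and few handle edges, and the plan never identifies them.

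\textbf{The paper's idea: delete a whole edge class.} Write the classes as $t_ia_i$, $r_{i-1}a_i$, $a_ib_i$, $t_{i-1}b_i$, $b_ir_i$, with root $r_i$ owning $a_i,b_i$.
\begin{itemize}
\item Deleting all $t_ia_i$, or all $t_{i-1}b_i$, makes every $t_i$ a leaf. What remains is unicyclic with the $3q$-cycle $r_{i-1}a_ib_ir_i\cdots$.
\item Deleting all $r_{i-1}a_i$, or all $b_ir_i$, makes every $r_i$ a leaf. What remains is unicyclic with the other $3q$-cycle $t_ia_ib_it_{i-1}\cdots$.
\item In both cases, one further uniformly random cycle edge gives a spanning tree.
\item Deleting all $a_ib_i$ leaves $\gcd(q,2)$ cycles, because the walk $r_i\,a_{i+1}\,t_{i+1}\,b_{i+2}\,r_{i+2}$ advances the index by two.
\end{itemize}
Parity matters in the last case, and your plan does not foresee it.
\begin{itemize}
\item For odd $q$ you get one $4q$-cycle; delete one random edge of it.
\item For even $q$ you get two $2q$-cycles. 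Deleting the class plus one more edge would disconnect the graph, so you must keep one random $a_ib_i$ as a bridge and delete one edge from each cycle.
\end{itemize}
Give the four side families weight $x$ each and the $a_ib_i$ family weight $y$:
\begin{itemize}
\item odd $q$: $x=(12q-3)/(60q-20)$ and $y=(3q-2)/(15q-5)$;
\item even $q$: $x=(6q-9)/(30q-40)$ and $y=(3q-2)/(15q-20)$.
\end{itemize}
These weights equalize every deletion probability at $(q+1)/(5q)$.

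\textbf{The construction's indexing repair matters.} As you display it, $D_{r_i}$ has sources $r_{i-1},t_{i-1},t_{i-2}$, and $t_{i-2}$ belongs to no demand assigned to $r_i$. You flag this, but the fix must raise both $t$-indices by one, as in the figure. Then $t_i$ merges with $r_{i-1}$ at the first Steiner vertex and $t_{i-1}$ enters at the second. If you instead just put $t_i$ in the second slot, $t_i$ and $r_i$ get the same two neighbours. The projection then contains $4$-cycles of density $4/6=2/3$, which exceeds $5q/(2(4q-1))$ for every $q\ge5$, so the stated maximum would be false.

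\textbf{Minor: split-freeness needs every split size.} It must hold for every $\varepsilon\in(0,\tfrac12]$. Your sentence that a split removes the whole arc is wrong. Your tight-cut check still covers all $\varepsilon$: a valid set of capacity exactly $\tfrac12$ with $v\in S$ and $u,w\notin S$ loses exactly $\varepsilon$. Also, $r_i$ is never a through-vertex within a single root layer, so only the four splits at $a_i$ and $b_i$ need checking.
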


\begin{proof}
\emph{Construction.}
For each $i\in\mathbb Z_q$, introduce terminals $r_i,t_i$, Steiner vertices
$a_i,b_i$, and demand $P_i=\{r_i,t_i\}$.  Assign $P_i$ with value one half to
roots $r_i$ and $r_{i+1}$.  The root-$r_i$ support consists of the five
half-arcs
\begin{equation}\label{eq:barrier-arcs}
  t_i\to a_i,
  \quad r_{i-1}\to a_i,
  \quad a_i\to b_i,
  \quad t_{i-1}\to b_i,
  \quad b_i\to r_i.
\end{equation}
All five displayed arcs carry value one half.  By the flow criterion in
\cref{sec:model}, feasibility for root $r_i$ follows from the half-unit paths
$t_i\to a_i\to b_i\to r_i$ for $P_i$, and
$r_{i-1}\to a_i\to b_i\to r_i$ and
$t_{i-1}\to b_i\to r_i$ for $P_{i-1}$.

\emph{Normalization.}
Every displayed arc is the unique outgoing root-$r_i$ arc of a valid set whose
capacity therefore equals the required one half, so no coordinate can be
decreased and the point is fully reduced; the five certificates are listed in
\cref{tab:tight-arcs}.  A split needs a vertex with an incoming and an
outgoing arc of the same label, and only $a_i$ and $b_i$ have one, so there
are four candidate splits; each is blocked by a tight set of the first family
of \eqref{eq:two-negative-families}, as listed in
\cref{tab:blocking-cuts}.  Finally every participating root $r_i$ is an endpoint of
the demand $P_i$, hence a terminal.

\emph{Counting.}
The projection has the five edge types
\[
  t_i a_i,
  \quad r_{i-1}a_i,
  \quad a_i b_i,
  \quad t_{i-1}b_i,
  \quad b_i r_i.
\]
For $q\ge3$ these are distinct, so $n=4q$ and $m=5q$.

\emph{Maximum density.}
In \cref{app:barrier-details} we construct a distribution on spanning trees of
the projection in which every edge has inclusion probability
$\rho=(4q-1)/(5q)$.  The construction rests on one combinatorial observation,
proved there: deleting all edges of one of the four ``side'' classes leaves a
connected unicyclic graph whose unique cycle has length $3q$, whereas deleting
all edges ${a_i b_i}$ leaves $\gcd(q,2)$ disjoint cycles, because the walk
\[
  r_i\to a_{i+1}\to t_{i+1}\to b_{i+2}\to r_{i+2}
\]
advances the index by two.  Given the distribution,
\cref{lem:equal-marginal-certificate} gives $\dens_x(W)\le1/(2\rho)$ for every
$W$ with $|W|\ge2$, and the whole vertex set attains equality.
\end{proof}

For unit costs on the displayed support, the exhibited point has
$c(x)=5q/2$, while a minimum spanning tree on the whole vertex set costs
exactly $4q-1$.  Indeed the $5q$ support edges contain a spanning tree of
$4q-1$ unit edges, which gives the upper bound; conversely, in the
shortest-path metric of a unit-cost simple graph any two distinct vertices are
at distance at least one, so every spanning tree on $4q$ vertices costs at
least $4q-1$.  Hence the all-support contraction has ratio
\[
  \frac{4q-1}{5q/2}=\frac85-\frac{2}{5q}.
\]
The whole vertex set is a maximum-density set, so choosing it is a legal run
of the maximum-density/MST framework.  The family therefore makes the
universal $5/8$ density lemma tight and yields an $8/5-o(1)$ run under
adversarial tie-breaking, in the sense that one of the maximum-density sets
gives that ratio.  It does not show that every
maximum-density set is bad, and it leaves open whether some specified rule for
choosing among maximum-density sets has a better guarantee.

\begin{remark}[Why this is not an integrality-gap construction]
\label{rem:family-root-cover}
For even $q$, buy the five-edge supports of one side of the bipartition of the
split-root cycle.  Every demand $P_i$ is split between $r_i$ and $r_{i+1}$, so
one of its two roots is bought; and if $z_P^r>0$ then every endpoint of $P$
other than $r$ has a positive root-$r$ witness path by
\cref{lem:witness-paths}, so both endpoints of $P$ lie in the connected
support $D_r$.  Buying the five edges of that support therefore connects them.
The total cost is at most $5q/2=c(x)$.  For odd $q$, omitting a maximum independent set of
roots buys $(q+1)/2$ supports, of cost at most
$(1+1/q)c(x)$.  Thus the density-tight family is inexpensive for this
family-specific root-aware choice.  We do not claim that the displayed LP
point is optimal, and hence make no integrality-gap claim for the family.
\end{remark}

\section{Conclusion}

The proof retains the orientation and root label of each projected half-unit.
At every globally degree-two support vertex, half-integrality and the cut
constraints leave no freedom: the vertex is either unit-assigned and creates
excess outdegree inside one root support, or split-assigned and creates overlap
between two supports.  The resulting inequality
$\beta_C\ge L_C/2$ turns this local information into circuit rank.  Balancing
that component certificate with the global degree sum gives projected density
$5/8$ and, through the inherited recursive contraction lemma, an $8/5$
rounding.

The barrier family makes the universal lower bound on ordinary projected
density asymptotically tight and produces an $8/5-o(1)$ legal run under
adversarial maximum-density tie-breaking.  Its scope is deliberately narrow:
it is neither an $8/5$ half-integral integrality-gap example nor a proof that
every deterministic tie rule fails.  Possible routes beyond $8/5$ include a
better choice among density maximizers, a second efficiently computable
root-aware certificate, or a rounding step that buys an object other than one
MST of one projected set.

\section*{Acknowledgments and disclosure of AI use}
Generative AI tools, including ChatGPT and Claude, were used extensively throughout this work.  They
contributed to the formulation of theorems, algorithms, proofs, and other
claims.  They were also used for literature search and citation preparation,
figure production, verification scripts, and the drafting and iterative
revision of the manuscript.  The author guided and reviewed this process.
The scripts served only as finite sanity checks; no theorem relies on computational verification.
The author assumes responsibility for all content.

\begingroup
\setstretch{1.0}
\makeatletter\renewcommand\@openbib@code{\setlength{\itemsep}{2.5pt}\setlength{\parsep}{0pt}}\makeatother
\bibliographystyle{plain}
\bibliography{references}

@article{AKR95,
  author  = {Agrawal, Ajit and Klein, Philip N. and Ravi, R.},
  title   = {When Trees Collide: An Approximation Algorithm for the Generalized {Steiner} Problem on Networks},
  journal = {SIAM Journal on Computing},
  volume  = {24},
  number  = {3},
  pages   = {440--456},
  year    = {1995},
  doi     = {10.1137/S0097539792236237}
}

@inproceedings{AGHJM25,
  author    = {Ahmadi, Ali and Gholami, Iman and Hajiaghayi, MohammadTaghi and Jabbarzade, Peyman and Mahdavi, Mohammad},
  title     = {Breaking a Long-Standing Barrier: {$2-\varepsilon$} Approximation for {Steiner Forest}},
  booktitle = {66th IEEE Symposium on Foundations of Computer Science (FOCS)},
  pages     = {373--444},
  year      = {2025},
  doi       = {10.1109/FOCS63196.2025.00023}
}

@inproceedings{BGT24,
  author    = {Byrka, Jaros{\l}aw and Grandoni, Fabrizio and Traub, Vera},
  title     = {The Bidirected Cut Relaxation for {Steiner Tree} Has Integrality Gap Smaller Than 2},
  booktitle = {65th IEEE Symposium on Foundations of Computer Science (FOCS)},
  pages     = {730--753},
  year      = {2024},
  doi       = {10.1109/FOCS61266.2024.00052},
  note      = {arXiv:2407.19905}
}

@inproceedings{BGT25,
  author    = {Byrka, Jaros{\l}aw and Grandoni, Fabrizio and Traub, Vera},
  title     = {On the Bidirected Cut Relaxation for {Steiner Forest}},
  booktitle = {Integer Programming and Combinatorial Optimization (IPCO)},
  series    = {Lecture Notes in Computer Science},
  pages     = {114--127},
  year      = {2025},
  doi       = {10.1007/978-3-031-93112-3_9},
  note      = {Extended abstract}
}

@article{BGT26,
  author  = {Byrka, Jaros{\l}aw and Grandoni, Fabrizio and Traub, Vera},
  title   = {On the Bidirected Cut Relaxation for {Steiner Forest}},
  journal = {Mathematical Programming},
  year    = {2026},
  doi     = {10.1007/s10107-026-02407-4},
  note    = {Series B; doi:10.1007/s10107-026-02407-4; preprint: arXiv:2412.06518 (2024)}
}

@book{Frank11,
  author    = {Frank, Andr{\'a}s},
  title     = {Connections in Combinatorial Optimization},
  publisher = {Oxford University Press},
  year      = {2011}
}

@article{GM93,
  author  = {Goemans, Michel X. and Myung, Young-Soo},
  title   = {A Catalog of {Steiner Tree} Formulations},
  journal = {Networks},
  volume  = {23},
  number  = {1},
  pages   = {19--28},
  year    = {1993},
  doi     = {10.1002/net.3230230104}
}

@inproceedings{GT26,
  author    = {Gupta, Anupam and Traub, Vera},
  title     = {{Steiner Forest}: A Simplified Better-Than-2 Approximation},
  booktitle = {Proceedings of the 58th Annual ACM Symposium on Theory of Computing (STOC)},
  pages     = {943--954},
  year      = {2026},
  doi       = {10.1145/3798129.3800808},
  note      = {arXiv:2511.18460}
}

@article{GW95,
  author  = {Goemans, Michel X. and Williamson, David P.},
  title   = {A General Approximation Technique for Constrained Forest Problems},
  journal = {SIAM Journal on Computing},
  volume  = {24},
  number  = {2},
  pages   = {296--317},
  year    = {1995},
  doi     = {10.1137/S0097539793242618}
}

@article{Jain01,
  author  = {Jain, Kamal},
  title   = {A Factor 2 Approximation Algorithm for the Generalized {Steiner} Network Problem},
  journal = {Combinatorica},
  volume  = {21},
  number  = {1},
  pages   = {39--60},
  year    = {2001},
  doi     = {10.1007/s004930170004}
}

@article{NashWilliams64,
  author  = {Nash-Williams, C. St.~J.~A.},
  title   = {Decomposition of Finite Graphs into Forests},
  journal = {Journal of the London Mathematical Society},
  volume  = {s1-39},
  number  = {1},
  pages   = {12},
  year    = {1964},
  doi     = {10.1112/jlms/s1-39.1.12}
}

@misc{PT26,
  author = {Paschmanns, Paul and Traub, Vera},
  title  = {The Bidirected Cut Relaxation for {Steiner Tree}: Better Integrality Gap Bounds and the Limits of Moat Growing},
  year   = {2026},
  doi    = {10.48550/arXiv.2602.19879},
  note   = {arXiv:2602.19879}
}

@inproceedings{RV99,
  author    = {Rajagopalan, Sridhar and Vazirani, Vijay V.},
  title     = {On the Bidirected Cut Relaxation for the Metric {Steiner Tree} Problem},
  booktitle = {Proceedings of the Tenth Annual ACM-SIAM Symposium on Discrete Algorithms (SODA)},
  pages     = {742--751},
  year      = {1999}
}

@article{Schrijver00,
  author  = {Schrijver, Alexander},
  title   = {A Combinatorial Algorithm Minimizing Submodular Functions in Strongly Polynomial Time},
  journal = {Journal of Combinatorial Theory, Series B},
  volume  = {80},
  number  = {2},
  pages   = {346--355},
  year    = {2000},
  doi     = {10.1006/jctb.2000.1989}
}

@book{Schrijver03,
  author    = {Schrijver, Alexander},
  title     = {Combinatorial Optimization: Polyhedra and Efficiency},
  publisher = {Springer},
  year      = {2003}
}

@article{SZM21,
  author  = {Schmidt, Daniel and Zey, Bernd and Margot, Fran{\c c}ois},
  title   = {Stronger {MIP} Formulations for the {Steiner} Forest Problem},
  journal = {Mathematical Programming},
  volume  = {186},
  pages   = {373--407},
  year    = {2021},
  doi     = {10.1007/s10107-019-01460-6}
}

@article{Wong84,
  author  = {Wong, Richard T.},
  title   = {A Dual Ascent Approach for {Steiner Tree} Problems on a Directed Graph},
  journal = {Mathematical Programming},
  volume  = {28},
  pages   = {271--287},
  year    = {1984},
  doi     = {10.1007/BF02612335}
}
\endgroup

\appendix

\section{A self-contained polynomial-time normalization}\label{app:preprocessing}

We prove \cref{lem:normalization} and spell out rerouting, full reduction,
the split-feasibility calculation, and an executable scan schedule.  Write
$n=|V|$.

\begin{lemma}[Root rerouting]\label{lem:root-rerouting}
From any half-integral feasible point one can eliminate all assignments to
nonterminal roots, preserving feasibility and half-integrality and without
increasing cost.
\end{lemma}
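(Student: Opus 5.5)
I will handle one nonterminal participating root $r$ at a time, and move its whole layer onto a terminal root.

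\emph{Choice of the new root.} Pick any demand $P$ with $z_P^r>0$ and let $t$ be one of its endpoints. Since $r$ is not a terminal, $t\neq r$.

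\emph{Feasibility of a $t$--$r$ flow.} By the flow reading of \eqref{eq:forest-bcr-cuts}, $x^r$ supports a $t$--$r$ flow of value $z_P^r$. More is true: every demand $P'$ with $z_{P'}^r>0$ sends $z_{P'}^r$ from each endpoint to $r$. I will redirect all these demands to root $t$ as follows.
\begin{enumerate}[label=(\arabic*),leftmargin=2em]
\item Define the new root-$t$ layer as $x^t+x^r+\tilde x$, where $\tilde x$ is the \emph{reversal} of one $t$--$r$ flow of value $M:=\max_{P'}z_{P'}^r$.
\item Set $z_{P'}^t\mathrel{+}=z_{P'}^r$ and $z_{P'}^r:=0$.
\end{enumerate}
Then an endpoint $p$ of $P'$ reaches $t$ by going to $r$ inside $x^r$ and continuing along $\tilde x$.

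\emph{Verifying the cut constraints.} Take a set $S$ valid for root $t$ that meets $P'$.
\begin{itemize}[leftmargin=1.8em]
\item If $r\notin S$, then $S$ was already valid for $(r,P')$, so $x^r(\delta^+(S))\ge z^r_{P'}$. The old $x^t$ handles the old $z^t$, and the two capacities add.
\item If $r\in S$, then $\tilde x$ carries $M\ge z^r_{P'}$ from $r$ to $t\notin S$, so $\tilde x(\delta^+(S))\ge M$.
\end{itemize}
The combination of old assignments to $t$ and new ones is handled by additivity.

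\emph{Cost.} The cost of this step increases by $c(\tilde x)$, which is the main obstacle. Rerouting must not raise cost, so the reversal cannot simply be added. I would instead avoid the added flow altogether. Take the flow $f$ from $t$ to $r$ of value $M\le 1$ inside $x^r$, and \emph{reverse} it within the layer rather than adding a reversed copy: replace $x^r$ by $x^r-f+\operatorname{rev}(f)$, and relabel the root to $t$. Metric costs are symmetric, so the cost is unchanged.

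\emph{Verifying the reversed layer.} For each $S\not\ni t$, I would verify
\[
(x^r-f+\operatorname{rev}f)(\delta^+(S))\ \ge\ z_{P'}^r .
\]
\begin{itemize}[leftmargin=1.8em]
\item If $r\notin S$, the flow $f$ has net zero across $S$. The change is therefore $f(\delta^-(S))-f(\delta^+(S))=0$, because $t,r\notin S$.
\item If $r\in S$, the net outflow of $f$ from $S$ is $-M$. The new capacity equals the old outgoing capacity plus $M$ minus a nonnegative term. Here I would take the old capacity as $x^r(\delta^+(S))\ge 0$ and argue through a max-flow/min-cut argument on $\delta^-(S)$ directly: the root-$r$ flows into $S\ni r$ from outside endpoints give the required bound.
\end{itemize}

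\emph{Half-integrality.} Choose $f$ as a half-integral flow, by integral max flow after scaling by two. This keeps every coordinate in $\tfrac12\mathbb Z_{\ge0}$.

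\emph{Termination.} Each iteration strictly decreases the number of nonterminal participating roots, so at most $n$ iterations are needed, each using one max-flow computation.

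The step I expect to be delicate is the case $r\in S$. There the argument must use $M=\max z^r$ rather than the individual values. If it fails, I would fall back to rerouting demand by demand, using a separate $t_{P'}$ for each group of demands that share a common endpoint.
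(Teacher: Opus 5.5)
Your final construction (reverse a $t$--$r$ flow inside the root-$r$ layer, add the result to the root-$t$ layer, and move all root-$r$ assignments to $t$) is the paper's. However, your choice of $t$ is wrong as stated, and this is a genuine gap. You take $t$ to be an endpoint of an \emph{arbitrary} demand $P$ with $z_P^r>0$, and then use ``the flow $f$ from $t$ to $r$ of value $M$ inside $x^r$''. The cut constraints only guarantee a $t$--$r$ flow of value $z_P^r$, which can be $\tfrac12$ while $M=1$.

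Here is a concrete failure. On the complete graph on $\{a,b,c,d,r\}$ take $P_1=\{a,b\}$ with $z^r_{P_1}=1$ and $P_2=\{c,d\}$ with $z^r_{P_2}=z^c_{P_2}=\tfrac12$. Put $x^r_{ar}=x^r_{br}=1$, $x^r_{cr}=x^r_{dr}=\tfrac12$, and $x^c_{dc}=\tfrac12$. This point is feasible and half-integral, and $r$ is a Steiner vertex. Suppose you pick $P=P_2$ and $t=c$. The only root-$r$ arc leaving $c$ carries $\tfrac12$, so no $c$--$r$ flow of value $M=1$ exists. Reversing the available half unit and moving everything to $c$ leaves $S=\{a,b,r\}$, which is valid for $(c,P_1)$, with outgoing capacity $\tfrac12$ (only the reversed arc $r\to c$). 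The new requirement there is $z'^c_{P_1}=1$, so the constraint is violated. The choice $t=d$ fails in the same way.

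The missing step is to pick $P$ attaining $\max_{P'}z^r_{P'}$, as the paper does with $P^*$. The cut constraints for that demand then supply, after scaling by two and integral max flow, a half-integral $t$--$r$ flow of value exactly $M$.

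Once $t$ is chosen this way, the step you flagged as delicate is immediate, because nothing is subtracted. For $r\in S$ and $t\notin S$,
\[
(x^r-f+\operatorname{rev}f)(\delta^+(S))=x^r(\delta^+(S))-f(\delta^+(S))+f(\delta^-(S))=x^r(\delta^+(S))+M\ \ge\ M\ \ge\ z^r_{P'},
\]
and nonnegativity of the new layer follows from $f\le x^r$. No min-cut argument on $\delta^-(S)$ is needed. The demand-by-demand fallback is also unnecessary, and in general it would not preserve cost: different groups share arcs of $x^r$, so handing the layer to several new roots would pay for those arcs more than once. The rest of your argument matches the paper: the case $r\notin S$, additivity with the old root-$t$ layer, cost invariance by symmetry of costs, and termination.
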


This is~\cite[Lem.~2]{BGT26}; the proof is included for completeness.

\begin{proof}
Let $r$ be a nonterminal root with $z_P^r>0$ for some $P$.  Choose
$P^*\in\arg\max_{P\in\calP}z_P^r$, put $\lambda:=z_{P^*}^r$, and choose a
terminal $v\in P^*$.  The cut constraints give a $v$--$r$ flow of value at least
$\lambda$ within capacities $x^r$.  Compute an integral maximum flow with
capacities $2x^r$, decompose it into paths and circulations, retain exactly
$2\lambda$ units of path flow, and halve.  The resulting half-integral flow
$f$ has value exactly $\lambda$.  Reverse this flow in the root-$r$
layer:
\[
  \bar x^r_{ab}:=x^r_{ab}-f_{ab}+f_{ba}.
\]
Note that for every demand $P$ we have $z_P^v+z_P^r\le\sum_{s\in V}z_P^s=1$, so the
assignment below stays within $[0,1]$.  Move the reoriented layer and all
root-$r$ assignments to root $v$:
\[
  x'^v=x^v+\bar x^r,
  \quad x'^r=0,
  \qquad
  z'^v_P=z^v_P+z^r_P,
  \quad z'^r_P=0.
\]
The cost is unchanged.

Let $S\subseteq V\setminus\{v\}$ be valid for $(v,P)$; in particular the
source $v$ lies outside $S$.  If $r\notin S$, both the source $v$ and the sink $r$ lie outside $S$, so
flow conservation gives
$\bar x^r(\delta^+(S))=x^r(\delta^+(S))\ge z_P^r$.  If $r\in S$, the source
$v$ is outside and the sink $r$ inside, so
$f(\delta^+(S))-f(\delta^-(S))=-\lambda$ and
$\bar x^r(\delta^+(S))\ge\lambda\ge z_P^r$.  Adding the old root-$v$
constraint proves feasibility.  Each application eliminates one
nonterminal participating root and creates none, so at most $|V|$ applications are
needed.
\end{proof}

\begin{proof}[Proof of \cref{lem:normalization}]
Apply \cref{lem:root-rerouting} first.  If $\calP=\varnothing$,
coordinatewise reduction deletes all arc mass and we are done.  Below, as
usual, a minimum over an empty family is interpreted as $+\infty$.

\paragraph{Coordinatewise reduction.}
For a positive coordinate $x^r_{uv}$, the maximum feasible decrease is
\begin{equation}
\min\left\{x^r_{uv},\
\min_{P\in\calP}\
\min_{\substack{S:\ u\in S,\ v\notin S,\ r\notin S\\P\cap S\ne\varnothing}}
\bigl(x^r(\delta^+(S))-z_P^r\bigr)\right\}.
\label{eq:coordinate-reduction}
\end{equation}
For fixed $P$, force in turn either endpoint of $P$ into $S$; the inner
minimum is then a directed minimum cut with the vertices $u$, $v$, $r$ and the
chosen endpoint forced to prescribed sides.  We use throughout the convention
that a forced-side pattern assigning one vertex to both sides is infeasible
and contributes the value $+\infty$, while duplicate compatible requirements
are merged; if both endpoint choices are infeasible, the family contains no
valid cut and the inner minimum is $+\infty$.  All values remain
half-integral.

We use the following monotonicity.  If decreasing coordinate $a$ by
$\varepsilon$ is infeasible at $x$, then it remains infeasible at every
$x'\le x$ with $x'_a=x_a$, because every relevant cut slack can only
shrink.  Consequently, scanning the currently positive coordinates once and
reducing each by its current maximum feasible amount produces a fully reduced
point.  Every remaining positive coordinate is at most one: it belongs to a
tight cut whose right-hand side is an assignment coordinate at most one.  The
potential
\[
  \Phi:=\sum_{r,a}x_a^r
\]
is therefore at most $|V||\vec E|=O(n^3)$ after this first reduction.

\paragraph{Directed splitting off.}
Consider a split at pairwise distinct $u,v,w$ in root $r$.  For a set
$S\subseteq V\setminus\{r\}$, write
$\chi_u=\mathbf1[u\in S]$, $\chi_v=\mathbf1[v\in S]$, and
$\chi_w=\mathbf1[w\in S]$.  The change in
$x^r(\delta^+(S))$, divided by the split amount, is
\begin{equation}
  -\chi_u(1-\chi_v)-\chi_v(1-\chi_w)+\chi_u(1-\chi_w).
\label{eq:split-change}
\end{equation}
It equals $-1$ exactly in the two cases
\begin{equation}\label{eq:two-negative-families}
  v\in S,\ u,w\notin S,
  \qquad\text{or}\qquad
  u,w\in S,\ v\notin S,
\end{equation}
and is zero in the other six membership patterns.

Thus the maximum feasible split is the minimum of the two arc values and all
valid-cut slacks in these two families.  For a fixed demand and family, force
each possible demand endpoint into $S$ and impose on $u,v,w$ and $r$ the
memberships prescribed by that family; each resulting minimum is one directed
minimum cut, and the convention above applies verbatim: a pattern assigning a
vertex to both sides is infeasible and contributes $+\infty$.

Both families are needed in the exact feasibility test: by
\eqref{eq:split-change}, they are precisely the two membership patterns in
which the cut capacity drops by the split amount.  Taking the minimum over
both families and the two split-arc values therefore computes the maximum
feasible split.

The executable schedule is: reroute nonterminal roots; fully reduce all
current coordinates; repeatedly scan all roots and all ordered pairwise
distinct triples, perform a maximum positive split when one is found, and
restart the scan; after one complete scan finds no split, fully reduce once
more.  Every positive split has half-integral amount at least $1/2$ and
decreases $\Phi$ by precisely its amount.  Coordinate reduction and rerouting
do not increase $\Phi$, so at most $O(n^3)$ positive splits occur.

One scan considers $O(n^4)$ root/triple choices, and for each choice the two
families, the demand occurrences, and the at most two endpoint choices require
polynomially many directed minimum-cut computations.  Since at most $O(n^3)$
splits occur, the whole normalization performs polynomially many min-cut,
max-flow, and reduction calls.  With rational arithmetic all encoding lengths
stay polynomial in the input bit length and the number of operations, so
normalization runs in ordinary polynomial time; we make no
strong-polynomiality claim and do not optimize the exponent.

When no split remains, reduce all coordinates once more.  If a split were
feasible after this coordinatewise decrease, it would also have been
feasible beforehand because every cut slack and both split-arc values were at least as large; hence the final reduction cannot re-enable a split.  Metricity gives
$c_{uw}\le c_{uv}+c_{vw}$, so no split increases cost.  The final point is
normalized and half-integral.
\end{proof}

\section{Deferred proofs for the inherited framework}\label{app:inherited}

This appendix contains the proofs of \cref{lem:witness-paths},
\cref{lem:contraction}, and \cref{lem:degree-normal-form}.  All three belong
to the framework of Byrka, Grandoni, and Traub: \cref{lem:contraction}
combines~\cite[Lem.~4]{BGT26} with the proof of~\cite[Thm.~1]{BGT26},
\cref{lem:degree-normal-form} is~\cite[Lem.~6 and Claim~1 of its
proof]{BGT26}, and \cref{lem:witness-paths} repackages the flow reading used
in the proof of~\cite[Lem.~6]{BGT26} together with full reduction.  We
reproduce the arguments so that the paper is self-contained, not because any
of them is new here.

\subsection{Path witnesses}

\begin{proof}[Proof of \cref{lem:witness-paths}]
For each triple, max-flow/min-cut gives a flow of value at least $z_P^r$.
Scale capacities by two, compute an integral maximum flow, decompose it into
unit $v$--$r$ paths and directed circulations, retain exactly $2z_P^r$ path
units, and divide by two.  The resulting circulation-free flow has value
exactly $z_P^r$ and decomposes into simple paths.

The paths chosen for all triples of root $r$ certify its cut constraints: if
$S$ is valid for $(r,P)$, it contains an endpoint
$v\in P\setminus\{r\}$, and the chosen $v$--$r$ flow crosses $S$ with value
$z_P^r$.  If a positive root-$r$ arc occurred in none of these paths, its
coordinate could be decreased while all chosen witnesses remained feasible,
contradicting full reduction.
\end{proof}

\subsection{Densest-set contraction}

\begin{proof}[Proof of \cref{lem:contraction}]
We induct on the number of current vertices.  If no demand remains, the
procedure returns the empty forest.  Otherwise the current point is
normalized, and the hypothesis gives a set of size at least two.  Hence the
chosen maximum-density set $W$ has $|W|\ge2$, and contracting it makes strict
progress.  Put $\varrho:=\dens_x(W)$.

\smallskip
\noindent\emph{Tree charge.}
For every metric edge $uv$ internal to $W$, let
\[
  y_{uv}:=\frac1\varrho\sum_r(x_{uv}^r+x_{vu}^r).
\]
Then $y(E[W])=|W|-1$.  For every $Q\subseteq W$ with $|Q|\ge2$, maximality
of $W$ gives $x(E[Q])\le\varrho(|Q|-1)$, and hence
$y(E[Q])\le|Q|-1$.  Together with $y\ge0$, these are Edmonds' spanning-tree
polytope inequalities~\cite[Ch.~50]{Schrijver03}.  Thus
\[
  \mst(W)\le c(y)
  =\frac{c(x|_{E[W]})}{\varrho}
  \le\frac{c(x|_{E[W]})}{\eta}.
\]

\smallskip
\noindent\emph{Quotient point.}
Let $\pi:V\to V'$ contract $W$ to one vertex $\widehat w$ and fix every
other vertex.  Regard the demand list as a list of indexed occurrences
$P_j=\{s_j,t_j\}$.  Delete occurrence $j$ if
$\pi(s_j)=\pi(t_j)$; otherwise keep
$P'_j=\{\pi(s_j),\pi(t_j)\}$, preserving repetitions.  For every surviving
occurrence, quotient root $q$, and quotient arc $(a,b)$ with $a\ne b$, set
\[
 z_{P'_j}^{\prime q}:=
   \sum_{r\in\pi^{-1}(q)}z_{P_j}^r,
 \qquad
 x_{ab}^{\prime q}:=
   \sum_{r\in\pi^{-1}(q)}
   \sum_{\substack{u\in\pi^{-1}(a)\\v\in\pi^{-1}(b)}}x_{uv}^r.
\]
Arcs whose endpoints lie in one fiber are discarded.

The assignment equations are preserved.  For feasibility, let
$S'\subseteq V'\setminus\{q\}$ be valid for $(q,P'_j)$ and take its full
preimage $S=\pi^{-1}(S')$.  For every $r\in\pi^{-1}(q)$, the set $S$ avoids
$r$ and contains an endpoint of $P_j$, so it is valid for $(r,P_j)$.  Summing
the corresponding old inequalities gives
$x^{\prime q}(\delta^+(S'))\ge z_{P'_j}^{\prime q}$.  Thus the quotient point
is feasible and half-integral.

Equip $V'$ with the shortest-path metric after contraction.  Every
noninternal old arc maps to a quotient edge of no larger cost, while the
internal arcs disappear.  Consequently
\[
  c(x')+c(x|_{E[W]})\le c(x).
\]
Normalizing $(x',z')$ can only decrease its cost.  The inductive solution in
the quotient therefore costs at most $c(x')/\eta$, and together with the tree
on $W$ the total is at most $c(x)/\eta$.

\smallskip
\noindent\emph{Expansion and feasibility.}
The predecessor representation described above expands every quotient edge
level by level.  Every deleted demand occurrence became internal to a fiber
and is connected by the tree bought when that fiber was contracted; every
surviving occurrence is connected by the inductive solution.  Taking the
union therefore connects all original demands.  Finally delete cycle edges
from each connected component until a forest remains; this preserves all
pairwise connections and cannot increase cost.  The recursion depth is at
most the initial number of vertices minus one.
\end{proof}

\subsection{Degree normal form}

\begin{proof}[Proof of \cref{lem:degree-normal-form}]
Let $v$ be a Steiner support vertex.  It is not a demand endpoint, and it
cannot be a participating root because normalized participating roots are
terminals.  In particular, no positive incident arc is labeled by root $v$.
By \cref{lem:witness-paths}, every positive arc incident with $v$ lies on a
simple endpoint-to-root path, which must enter and leave $v$.  Hence
$d_{\wideG}(v)\ge2$.

Suppose $d_{\wideG}(v)=2$.  By \eqref{eq:projection-degree} the incident LP
mass is exactly one.  If it is
carried by one arc of value one, any selected path using that arc requires a
second incident arc at $v$, a contradiction.  Hence it is carried by two
half-arcs.  They cannot both enter or both leave $v$, and a path passing
through $v$ forces them to have the same root label.  Write them as
$u\to v$ and $v\to w$ in root $r$.  We have $u\ne w$, for otherwise a simple
path would return immediately to its predecessor.

Split the two half-arcs into $u\to w$.  After the split, no root-$r$ mass is
incident with $v$.  Consider a valid cut $S$ whose capacity the split decreases.
By \eqref{eq:two-negative-families}, either $v\in S$ and $u,w\notin S$, or
$u,w\in S$ and $v\notin S$.  Toggle the membership of $v$.  The toggled set
is still valid because $v$ is neither a root nor a demand endpoint, and since
the only root-$r$ arcs at $v$ are $u\to v$ and $v\to w$, the old capacity of
the toggled set equals the new capacity of $S$.  Every other cut is unchanged.
Hence the split is feasible, contradicting split-freeness.  Thus every Steiner
support vertex has degree at least three.

Now let $v$ be a terminal and choose a demand $P$ incident with it.  The
singleton constraints for roots $r\ne v$ and the complementary-singleton
constraint for root $v$ give
\[
  \sum_{r\ne v}x^r(\delta^+(\{v\}))\ge1-z_P^v,
  \qquad
  x^v(\delta^-(\{v\}))\ge z_P^v.
\]
These are disjoint parts of the mass incident with $v$, so that mass is at
least one.  Hence $d_{\wideG}(v)\ge2$.
\end{proof}

\section{Details of the exact density barrier}\label{app:barrier-details}

We verify normalization of the construction and give the equal-marginal
spanning-tree distribution used in \cref{thm:barrier}.

\subsection{Tight cuts and split blockers}

The following cuts show that every half-arc in \eqref{eq:barrier-arcs} is
indispensable in the root-$r_i$ layer.

\begin{table}[ht]
\centering
\small
\renewcommand{\arraystretch}{1.12}
\begin{tabular}{@{}lll@{}}
\toprule
Arc & Tight valid set & Unique outgoing $r_i$-arc \\
\midrule
$t_i\to a_i$ & $\{t_i\}$ & $t_i\to a_i$ \\
$r_{i-1}\to a_i$ & $\{r_{i-1}\}$ & $r_{i-1}\to a_i$ \\
$a_i\to b_i$ & $\{t_i,r_{i-1},a_i\}$ & $a_i\to b_i$ \\
$t_{i-1}\to b_i$ & $\{t_{i-1}\}$ & $t_{i-1}\to b_i$ \\
$b_i\to r_i$ & $\{t_i,r_{i-1},a_i,t_{i-1},b_i\}$ & $b_i\to r_i$ \\
\bottomrule
\end{tabular}
\caption{Full-reduction certificates.  Every displayed cut contains an
endpoint of $P_i$ or $P_{i-1}$ assigned with value one half to $r_i$.}
\label{tab:tight-arcs}
\end{table}

The only same-root incoming/outgoing pairs occur at $a_i$ and $b_i$.  Each
possible split is blocked by a tight cut from the first family of
\eqref{eq:two-negative-families}.

\begin{table}[ht]
\centering
\small
\renewcommand{\arraystretch}{1.12}
\begin{tabularx}{\textwidth}{@{}l>{\raggedright\arraybackslash}p{.30\textwidth}
                                  >{\raggedright\arraybackslash}X@{}}
\toprule
Split triple $(u,v,w)$ & Blocking cut $S$ & Outgoing arc and demand endpoint \\
\midrule
$(t_i,a_i,b_i)$ & $\{a_i,r_{i-1}\}$ & $a_i\to b_i$; endpoint $r_{i-1}$ of $P_{i-1}$ \\
$(r_{i-1},a_i,b_i)$ & $\{a_i,t_i\}$ & $a_i\to b_i$; endpoint $t_i$ of $P_i$ \\
$(a_i,b_i,r_i)$ & $\{b_i,t_{i-1}\}$ & $b_i\to r_i$; endpoint $t_{i-1}$ of $P_{i-1}$ \\
$(t_{i-1},b_i,r_i)$ & $\{b_i,a_i,t_i,r_{i-1}\}$ & $b_i\to r_i$; endpoints of $P_i$ or $P_{i-1}$ \\
\bottomrule
\end{tabularx}
\caption{Split-blocking cuts.  Each has outgoing capacity exactly one half
in the root-$r_i$ layer.}
\label{tab:blocking-cuts}
\end{table}

\subsection{An equal-marginal spanning-tree distribution}

Write the five projected edge types as
\begin{align*}
 \alpha_i&=t_i a_i,&
 \beta_i&=r_{i-1}a_i,&
 \gamma_i&=a_i b_i,\\
 \delta_i&=t_{i-1}b_i,&
 \epsilon_i&=b_i r_i.
\end{align*}
Each vertex has degree three or two: $a_i$ lies on $\alpha_i,\beta_i,\gamma_i$
and $b_i$ on $\gamma_i,\delta_i,\epsilon_i$, while $t_i$ lies on
$\alpha_i,\delta_{i+1}$ and $r_i$ on $\beta_{i+1},\epsilon_i$.

\begin{lemma}[Topology of the deleted graphs]\label{lem:barrier-topology}
Let $q\ge3$ and delete from the projection all $q$ edges of one class.
\begin{enumerate}[label=(\roman*),leftmargin=1.8em]
\item If the deleted class is $\alpha$ or $\delta$, the remaining graph is
connected and unicyclic, and its unique cycle is
$r_{i-1}\,\beta_i\,a_i\,\gamma_i\,b_i\,\epsilon_i\,r_i$ closed over
$i\in\mathbb Z_q$, of length $3q$.
\item If the deleted class is $\beta$ or $\epsilon$, the remaining graph is
connected and unicyclic, and its unique cycle is
$t_i\,\alpha_i\,a_i\,\gamma_i\,b_i\,\delta_i\,t_{i-1}$ closed over
$i\in\mathbb Z_q$, of length $3q$.
\item If the deleted class is $\gamma$, the remaining graph is a disjoint
union of $\gcd(q,2)$ cycles covering all $4q$ vertices: one cycle of length
$4q$ if $q$ is odd, and two cycles of length $2q$ if $q$ is even.  In the even
case $a_i$ and $b_i$ lie in different cycles, so every $\gamma_i$ joins them.
\end{enumerate}
\end{lemma}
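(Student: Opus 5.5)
The plan is a direct structural verification using the incidence list recorded just before the lemma. There, $a_i$ lies on $\alpha_i,\beta_i,\gamma_i$; $b_i$ on $\gamma_i,\delta_i,\epsilon_i$; $t_i$ on $\alpha_i,\delta_{i+1}$; and $r_i$ on $\beta_{i+1},\epsilon_i$. For $q\ge3$ the $5q$ edges are distinct (as noted in the proof of \cref{thm:barrier}), so we work in a simple graph with $4q$ vertices. Deleting one class of $q$ edges leaves exactly $4q$ edges on $4q$ vertices. Hence, for (i) and (ii), it suffices to prove connectivity and exhibit one cycle: a connected graph with as many edges as vertices is unicyclic, and the exhibited cycle is then the unique one.

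For (i), suppose we delete $\alpha$. The edges $\beta_i,\gamma_i,\epsilon_i$ form the closed walk $r_{i-1}\,a_i\,b_i\,r_i$, concatenated over $i\in\mathbb Z_q$. It visits the $3q$ vertices $r_i,a_i,b_i$, each exactly once, so it is a cycle of length $3q$. Every $t_{i-1}$ has lost $\alpha_{i-1}$ and retains only $\delta_i$, which attaches it as a pendant vertex to $b_i$ on the cycle. This gives connectivity, and with the edge count it gives unicyclicity. Deleting $\delta$ instead makes each $t_i$ a pendant on $a_i$ via $\alpha_i$, with the same cycle. Part (ii) is symmetric. The walk $t_i\,a_i\,b_i\,t_{i-1}$ uses $\alpha_i,\gamma_i,\delta_i$ and, traversed with decreasing index, closes into a $3q$-cycle through all $t,a,b$. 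Deleting $\beta$ leaves each $r_i$ pendant on $b_i$ via $\epsilon_i$, and deleting $\epsilon$ leaves each $r_{i-1}$ pendant on $a_i$ via $\beta_i$.

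For (iii), deleting $\gamma$ lowers the degrees of $a_i$ and $b_i$ from three to two, while $t_i$ and $r_i$ already have degree two. The remaining graph is therefore $2$-regular, that is, a disjoint union of cycles covering all $4q$ vertices. To identify the cycles, follow the forced walk
\[
  r_i\xrightarrow{\beta_{i+1}}a_{i+1}\xrightarrow{\alpha_{i+1}}t_{i+1}\xrightarrow{\delta_{i+2}}b_{i+2}\xrightarrow{\epsilon_{i+2}}r_{i+2}.
\]
Each block visits four new vertices and shifts the $r$-index by $2$. The cycle through $r_i$ therefore closes exactly when the index returns to $i$, after $q/\gcd(q,2)$ blocks. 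This gives cycles of length $4q/\gcd(q,2)$, and $\gcd(q,2)$ of them, since every vertex appears in some block. In the even case, the cycle through $r_i$ contains $a_{i+1}$ and $b_{i+2}$. So $a_j$ lies in the cycle indexed by the parity of $j-1$, while $b_j$ lies in the cycle indexed by the parity of $j$; these are different cycles, so each $\gamma_j$ joins the two cycles.

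The only real care point is bookkeeping: checking that the orbit argument covers every vertex exactly once, and that the hypothesis $q\ge3$ is exactly what rules out coincident edges, which would break both simplicity and the edge count. I expect no substantive obstacle beyond this careful index tracking.
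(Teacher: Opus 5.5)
Your proof is correct and follows essentially the paper's argument. In (i)--(ii) you exhibit the $3q$-cycle, attach the remaining $q$ vertices as pendants, and invoke $|E|=|V|=4q$ for unicyclicity, where the paper first strips the leaves and notes that the rest is $2$-regular; (iii) is the same forced-walk argument with the orbits of $i\mapsto i+2$ and the parity split. One harmless quibble with your closing remark: the five edge classes join five different pairs of vertex classes, so the $5q$ edges are distinct for every $q\ge1$, and $q\ge3$ is not what rules out coincident edges.
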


\begin{proof}
(i) Delete all $\alpha$.  Then $t_i$ has degree one, so all $q$ vertices $t_i$
are leaves; deleting them leaves the $3q$ vertices $r_i,a_i,b_i$, each of
degree exactly two, with the edges $\beta,\gamma,\epsilon$.  A graph in which
every degree is two is a disjoint union of cycles, and following
$r_{i-1}\to a_i\to b_i\to r_i$ advances the index by one after three edges,
so there is exactly one cycle, of length $3q$.  Reattaching the $q$ leaves
gives a connected graph with $4q$ vertices and $4q$ edges whose only cycle is
the displayed one.  Deleting all $\delta$ instead again makes every $t_i$ a
leaf, this time through $\alpha_i$, and leaves the same three classes on
$r_i,a_i,b_i$.

(ii) Delete all $\beta$.  Then $r_i$ has degree one and is a leaf; deleting
the $q$ leaves leaves $t_i,a_i,b_i$, each of degree two, with the edges
$\alpha,\gamma,\delta$.  Following $t_i\to a_i\to b_i\to t_{i-1}$ decreases
the index by one after three edges, so there is exactly one cycle, of length
$3q$.  Deleting all $\epsilon$ instead makes each $r_i$ a leaf through
$\beta_{i+1}$ and leaves the same three classes.

(iii) Delete all $\gamma$.  Every vertex now has degree exactly two, so the
graph is a disjoint union of cycles covering all $4q$ vertices and using all
$4q$ remaining edges.  Starting at $r_i$ and following the unique continuation
gives
\[
  r_i\ \xrightarrow{\ \beta_{i+1}\ } a_{i+1}
      \ \xrightarrow{\ \alpha_{i+1}\ } t_{i+1}
      \ \xrightarrow{\ \delta_{i+2}\ } b_{i+2}
      \ \xrightarrow{\ \epsilon_{i+2}\ } r_{i+2},
\]
so four edges advance the index by two.  The orbits of $i\mapsto i+2$ on
$\mathbb Z_q$ therefore correspond to the cycles: there are $\gcd(q,2)$ of
them, each of size $q/\gcd(q,2)$, and each contributes four edges per index
step, giving cycle length $4q/\gcd(q,2)$.  For odd $q$ this is one cycle of
length $4q$; for even $q$ it is two cycles of length $2q$.  In the even case
the displayed walk shows that the cycle through $r_i$ contains $a_{i+1}$ and
$b_{i+2}$, so $a_j$ lies in the cycle of the residue $j-1$ and $b_j$ in the
cycle of the residue $j$ modulo two.  These residues differ, so $a_j$ and
$b_j$ lie in different cycles and $\gamma_j={a_j b_j}$ joins them.
\end{proof}

For $X\in\{\alpha,\beta,\delta,\epsilon\}$, delete all $q$ edges of class
$X$ and then one uniformly random edge of the unique cycle supplied by
\cref{lem:barrier-topology}(i)--(ii).  The result has $4q-1$ edges, is
connected, and is acyclic, hence a spanning tree; call it an $X$-tree.  By
\cref{lem:barrier-topology}, that cycle consists of the classes
$\beta,\gamma,\epsilon$ when $X\in\{\alpha,\delta\}$ and of
$\alpha,\gamma,\delta$ when $X\in\{\beta,\epsilon\}$.

Suppose first that $q$ is odd.  By \cref{lem:barrier-topology}(iii), deleting
all $\gamma$-edges leaves one $4q$-cycle; delete one uniformly random edge to
obtain a $\gamma$-tree.
Give each of the four side-tree distributions probability
\[
  x_o:=\frac{12q-3}{60q-20}
\]
and the $\gamma$-tree distribution probability
\[
  y_o:=\frac{3q-2}{15q-5}.
\]
Both quantities are positive for every odd $q\ge3$ and satisfy
$4x_o+y_o=1$, so they define a probability distribution over the five
families.  A fixed non-$\gamma$ edge is deleted with probability
one in exactly one side family, lies on the unique cycle in two side
families, and lies on the $\gamma$-cycle.  Its deletion probability is
\[
  x_o+\frac{2x_o}{3q}+\frac{y_o}{4q}=\frac{q+1}{5q}.
\]
A fixed $\gamma$-edge is deleted with probability
\[
  y_o+\frac{4x_o}{3q}=\frac{q+1}{5q}.
\]

Now suppose $q$ is even, so $q\ge4$.  By \cref{lem:barrier-topology}(iii),
deleting all $\gamma$-edges leaves two disjoint cycles of length $2q$, and
every $\gamma_i$ joins them.  Retain one uniformly random $\gamma$-edge as a
bridge and delete one uniformly random edge from each cycle; the result has
$4q-2+1=4q-1$ edges, is connected, and is acyclic, hence a spanning tree.  A fixed $\gamma$-edge is then deleted with probability
$(q-1)/q$, while a fixed non-$\gamma$ edge is deleted with probability
$1/(2q)$.  Give each side-tree distribution probability
\[
  x_e:=\frac{6q-9}{30q-40}
\]
and this $\gamma$-tree distribution probability
\[
  y_e:=\frac{3q-2}{15q-20}.
\]
Both quantities are positive for every even $q\ge4$ and satisfy
$4x_e+y_e=1$.  Every non-$\gamma$ edge has deletion probability
\[
  x_e+\frac{2x_e}{3q}+\frac{y_e}{2q}=\frac{q+1}{5q},
\]
and every $\gamma$-edge has deletion probability
\[
  \frac{4x_e}{3q}+\frac{q-1}{q}y_e=\frac{q+1}{5q}.
\]
Thus every edge is present with probability
\[
  1-\frac{q+1}{5q}=\frac{4q-1}{5q}.
\]
As a sanity check, $5q\cdot(4q-1)/(5q)=4q-1$, the number of edges in every
spanning tree.

\end{document}